\documentclass[acmtog]{acmart}

\usepackage{booktabs} 
\newtheorem{problem}{Problem}[section]
\citestyle{acmauthoryear}

\setcopyright{acmlicensed}
\acmJournal{TOG}
\acmYear{2023}
\acmVolume{42}
\acmNumber{6}
\acmArticle{231}
\acmMonth{12}
\acmPrice{15.00}
\acmDOI{10.1145/3618394}

\usepackage{bm}
\usepackage{siunitx}
\usepackage{wrapfig}
\usepackage{amsmath}
\usepackage{amsfonts}
\usepackage{amsthm}
\usepackage[ruled,vlined,noend,linesnumbered]{algorithm2e}
\usepackage{setspace}
\usepackage{graphicx}
\usepackage{overpic}
\usepackage[]{mdframed}

\newtheorem{prop}{Proposition}

\definecolor{RCColor}{rgb}{0.15, 0.38, 0.68}
\newcommand{\RC}[1]{}
\definecolor{DZColor}{rgb}{0.15, 0.68, 0.38}
\newcommand{\DZ}[1]{}

\def\tr{\mathop{\rm tr}}

\newcommand{\bR}{\mathbb{R}}

\newcommand{\cE}{\mathcal{E}}

\newcommand{\cP}{\mathcal{P}}

\newcommand{\tl}{{\tilde{\lambda}}}
\newcommand{\tM}{{\tilde{M}}}
\newcommand{\scp}[1]{\scshape{#1}}
\newcommand{\hTh}{\hat\Theta}

\newcommand{\del}{\mathrm{Del}}

\newcommand{\Dl}{\Delta \lambda}
\newcommand{\DF}{\nabla F}

\newcommand{\shear}{\sigma}
\newcommand{\asum}{\Sigma}

\begin{document}

\title{Metric Optimization in Penner Coordinates}

\author{Ryan Capouellez}
\affiliation{%
  \institution{New York University}
  \country{USA}
}
\email{rjc8237@nyu.edu}

\author{Denis Zorin}
\affiliation{%
  \institution{New York University}
    \country{USA}
}
\email{dzorin@cs.nyu.edu}

\begin{abstract}
Many parametrization and mapping-related problems in geometry processing can be viewed as metric optimization problems, i.e., computing a metric minimizing a functional and satisfying a set of constraints, such as flatness. 

\emph{Penner coordinates} are global coordinates on the space of metrics on meshes with a fixed vertex set and topology, but varying connectivity, making it homeomorphic to the Euclidean space of dimension equal to the number of edges in the mesh, without any additional constraints imposed. These coordinates play an important role in the theory of discrete conformal maps, enabling recent development of highly robust algorithms with convergence and solution existence guarantees for computing such maps. 

We demonstrate how Penner coordinates can be used to solve a general class of optimization problems involving metrics, including optimization and interpolation, while retaining the key solution existence guarantees available for discrete conformal maps.

\end{abstract}

%
%
\begin{CCSXML}
<ccs2012>
   <concept>
       <concept_id>10010147.10010371.10010396.10010398</concept_id>
       <concept_desc>Computing methodologies~Mesh geometry models</concept_desc>
       <concept_significance>500</concept_significance>
       </concept>
   <concept>
       <concept_id>10010147.10010371.10010396.10010397</concept_id>
       <concept_desc>Computing methodologies~Mesh models</concept_desc>
       <concept_significance>500</concept_significance>
       </concept>
 </ccs2012>
\end{CCSXML}

\ccsdesc[500]{Computing methodologies~Mesh geometry models}
\ccsdesc[500]{Computing methodologies~Mesh models}
%
%

\keywords{Parametrization, discrete metrics, cone metrics, conformal mapping, intrinsic triangulation, Penner coordinates}

\begin{teaserfigure}
    \centering
   \includegraphics[width=\textwidth]{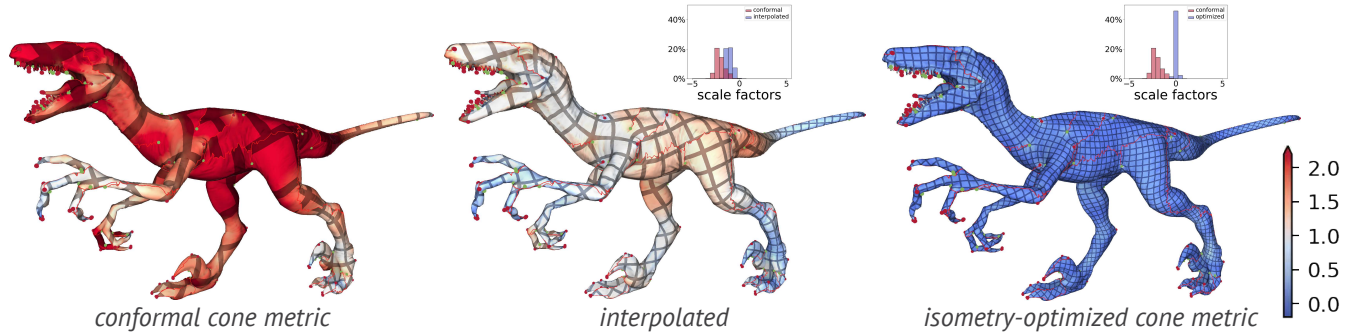}
    \caption{
    Examples of parametrizations produced using our method, interpolating in metric Penner coordinates between  conformal and isometry-optimized maps. Cones are marked by red and green points, and red paths indicate cuts connecting cones and cutting the surface to a disk. A grid texture on the parametrization is mapped back to the surface. The best fit scale factors (Section \ref{sec:objectives}) are also visualized as shading on the surface; these scale factors measure the local area distortion of the parametrization.
    }
    \label{fig:teaser}
\end{teaserfigure}

\maketitle

\section{Introduction}
\label{sec:intro}
A number of common geometry processing operations can be viewed as optimizing the metric of a surface with respect to a quality measure and subject to a set of constraints.  The most common  problems in this class are related to surface parameterization. Surface parameterization is usually defined as computing a map, or a collection of maps, from the surface, cut to one or more topological disks, to the plane.   Alternatively, these maps can be viewed as an almost everywhere flat metric on the surface (i.e., a cone metric, with nonzero curvature concentrated at a small number of vertices) minimizing a distortion measure.  This approach makes it possible to use intrinsic variables, does not require an arbitrary of choice of cuts, and turns out to be more natural for settings, e.g., requiring constraints on angles, common for global parametrization problems.  

A natural way to represent a cone metric on a triangle mesh is by assigning lengths to the edges. This representation highlights the challenge of the problem, whether we view it as a metric construction or a mapping problem: for a fixed connectivity, not every cone metric can be represented by edge lengths as the range of representable metrics is constrained by triangle inequalities on each face. It is also in general unclear, for a fixed connectivity, if there is an edge length assignment that satisfies all required constraints (most importantly, flatness/prescribed total angles at vertices). In other words, \emph{it is not known, for many important types of parametrization problems, if a solution exists for any specific fixed mesh connectivity.}  

While such problems remain open for fixed mesh connectivities, many existence and uniqueness questions have been answered \emph{for conformal maps} when the problem is generalized to the whole space of flat cone metrics with a given vertex set and surface topology \cite{gu2018discrete, springborn2019ideal}. These recent theoretical breakthroughs have yielded practical algorithms to compute arbitrarily prescribed cone metrics that are (discretely) conformally equivalent to the metric of the original surface \cite{gillespie2021discrete, campen2021efficient}.  Importantly, this requires considering connectivity changes.  

An essential component of this theory is the definition of coordinates on the space of metrics.  For any flat metric with given vertices, there are connectivities in which it can be described by edge lengths. At the same time, the same metric can be described by different connectivities: e.g., an \emph{intrinsic edge flip}  produces a different connectivity with a different length assignment representing the same metric. That is, the space of cone metric is covered by the overlapping subsets of metrics defined by edge lengths on given connectivities.

To be able to navigate the complete space of flat metrics, we need to consider connectivity changes, which seemingly necessitates frequent discontinuous changes in metric representation.  Another major difficulty in using variable connectivity in the context of optimization problems is that all known distortion measures are \emph{not} invariant with respect to common connectivity changes (e.g., edge flips).

Fortunately, it turns out that the whole space of cone metrics can be parametrized with quantities defined on \emph{fixed} connectivity known as \emph{Penner coordinates}, which we describe in more detail in Section~\ref{sec:penner}. 
These coordinates establish a \emph{bijection} between the space of metrics with a fixed set of cone vertices and $\bR^{|E|}$,  where $|E|$ is the (fixed) number of edges in triangulations with a fixed number of vertices and surface genus. 
Penner coordinates are not restricted by triangle inequalities, allowing for unconstrained optimization in the space of metrics and ensuring the existence of a solution in many cases (e.g., for convex objectives). 

In this paper, we demonstrate how Penner coordinates can be used computationally to perform metric optimization and interpolation  not restricted to the class of conformal maps. 

We demonstrate that for a class of objectives, one can design optimization methods working in Penner coordinate space such that 
\begin{itemize}
\item  a choice of metric satisfying constraints (e.g., flatness)  at vertices is \emph{guaranteed} to exist, possibly requiring remeshing;  
\item  there are natural distortion measures, related to commonly used ones, that can be extended to the whole space of metrics when expressed in Penner coordinates; 
\item  For suitable objectives, the optimal metric can be computed using reliable numerical methods. We present two formulations: an unconstrained optimization problem in shear coordinates, obtained by linear transformation of Penner coordinates,  for which any gradient-based method can be used, and a more efficient coordinate-projection algorithm treating constraints in an  explicit way.  
\end{itemize}

We test our method on the complete dataset \cite{Myles:2014}, 
and its variations used in \cite{campen2021efficient}, which contains many challenging examples, and on a version of this dataset cut to disks and with boundary angle prescriptions that force extremely high distortion.  We demonstrate that many orders of magnitude of reduction of the distortion can be obtained while still satisfying the constraints exactly.

\section{Related Work}
\label{sec:related}

There is a wealth of work on various types of metric optimization problems, most of it focused 
on parametrization, i.e., computing (almost) everywhere flat metrics, with a variety of 
constraints.   These methods are distinguished by the choice of variables, optimization objectives, 
type of constraints considered, guarantees these methods provide and initialization requirements.
We briefly review most closely related work, and refer to recent surveys 
 \cite{naitsat2021inversion} and \cite{fu2021inversion} for a more comprehensive review. 

The most fundamental choice is the choice of metric representation, with the main distinction 
between  \emph{intrinsic} variable methods, using e.g., lengths, angles or conformal scale factors, 
and parametric coordinate methods (restricted to parametrization) with planar vertex coordinates  as variables. While these methods share a number of common challenges, e.g., for most methods, 
flat metric/injectivity constraints are nonlinear, these typically address distinct categories 
of problems, as they differ in their classes of natural constraints: parametric 
positional constraints are easy to formulate in $u,v$ representation, while constraints on lengths 
or angles are nonlinear and hard to impose, whereas for intrinsic variables the situation is 
reversed. 

\paragraph{Intrinsic methods.} Most significant earlier works in this category include 
\cite{sheffer2001parameterization, kharevych2006discrete,springborn2008conformal,BenChen:2008}, 
these works focus on conformal maps, using angles, logarithmic radii and scale factors 
as intrinsic variables.  In \cite{kharevych2006discrete}, intrinsic Delaunay triangulations 
\cite{bobenko2007discrete} are used, similar to more recent methods for conformal maps 
\cite{sun2015discrete}, \cite{campen2021efficient} and \cite{gillespie2021discrete}. These methods, the first  providing full guarantees (up to the errors introduced by finite numerical precision) on 
discrete conformal map construction, build on the mathematical foundations  developed in 
\cite{gu2018discrete,gu2018discrete2,springborn2019ideal} with key concepts 
originating in \cite{penner1987decorated,rivin1994euclidean}.  Our approach further develops
this approach applying it to metric deformations that are not necessarily conformal, 
while retaining many of the desirable features. In contrast with the per-vertex logarithmic scale factors used in prior work, we use per-edge \emph{logarithmic length} coordinates that cover the entire space of cone metrics

Conformal map methods naturally support constraints on angles at vertices, or more generally linear combinations of angles along loops,  and boundary lengths (through fixing scale factors).  These features are used in \cite{Campen:2017:SimilarityMaps} and \cite{campen2018seamless}) to construct 
\emph{similarity} maps and \emph{seamless} maps for global surface parametrization. 

One important feature of these methods is that the problem is reduced (with some caveats in earlier methods) to convex optimization. As a consequence these techniques, although nonlinear, 
do \emph{not} require initialization, or, rather any initialization can be used successfully, 
even if it does not satisfy the constraints.  The versions of conformal parametrization 
methods that guarantee solution existence do not preserve connectivity; two main 
approaches to connectivity changes were proposed: the one we follow is based 
on maintaining Delaunay property of the meshes; the alternative 
\cite{luo2004combinatorial,springborn2008conformal} is to perform surgery when triangles 
degenerate; this is less desirable as the transition between different connectivities 
involves degeneracies.

\paragraph{Parametric coordinate methods.} Methods using 2D coordinates in the parametric 
plane form a different category. Compared to intrinsic methods, these naturally support 
a different class of constraints, e.g., positional constraints, while angle and length 
constraints are more problematic in this setting.  In this class, there are several methods that provide 
local or global bijectivity guarantees, without requiring an initial map, most importantly, 
\cite{Tutte}, on which most other methods with guarantees are based:
\cite{weber2014locally}, \cite{ProgEmbedding}.  While intrinsic methods naturally 
work with arbitrary closed surfaces and through doubling with surfaces with boundary \cite{sun2015discrete}; 
parametric coordinate methods require cutting meshes to disks, and defining target 
boundaries (simple, or \emph{self-overlapping} polygons), which may not be easy to construct 
e.g. for seamless parametrization problems (cf. \cite{Zhou:2020}, \cite{levi2021seamless}).
These methods do not naturally support free boundaries, although can be used as 
starting points of free boundary methods. 

We briefly mention a number of techniques that assume an initial parametrization (usually 
it comes from  Tutte's map, although intrinsic or methods like \cite{weber2014locally} can be used) and then deform the parametrization or metric while  maintaining bijectivity, using various types of barrier energies \cite{Schueller:LIM:2013}, \cite{Rabinovich:2017:SLI}, \cite{liu2018progressive}.  These methods require a feasible starting point, so need to be augmented by a different method that yields one.
Constraint-convexification approaches \cite{Lipman:2012} in principle, can be started from unfeasible points, but may have no 
feasible solutions in some cases when the original problem has one. 

As initializing parametric coordinate methods with a feasible solution is often difficult, a number of techniques were proposed to avoid the need for this.  This limits the space of available solutions, so if a feasible solution is not found, it does not mean it does not exist. A  number of recent promising methods aim to produce high-quality locally injective maps without a feasible starting point, or connectivity changes.  For example, \cite{du2020lifting}  introduces a novel total lifted content energy which  has the property that its \emph{global} minimum is an injective embedding, if one exists. \cite{overby2021globally} proposes another method in this category. It demonstrates an impressive practical success rate for meshes with fixed connectivity and fixed boundary, but the existence of a solution or an attainable local minimum is not guaranteed.  
In comparison, our focus is on demonstrating that an important range of problems are guaranteed to have a solution in the space of metric, although connectivity changes may be required.

\section{Problem Formulation and Overview}
\label{sec:problem}

We start by discussing the basic problem that motivates our work. 

A \emph{discrete metric} for a triangular mesh $M = (V,F,E)$ is defined by an assignment of lengths $\ell : E \rightarrow \bR^+$, satisfying triangle inequality. The discrete metric naturally defines a \emph{cone metric} on the mesh, whose restriction to each triangle is just the planar metric. All curvature of this metric is concentrated at mesh vertices.

\begin{problem}
 For a given discrete metric $(M,\ell^0)$,  let  $\Theta_i$ be the sum of angles of triangles sharing a vertex $i$,   Given \emph{target angles} $\hat\Theta_i$ (respecting the discrete Gauss-Bonnet theorem), compute a new metric $(M,\ell)$ for which the sums of angles at vertices have values  $\hat\Theta_i$ while minimizing an objective $E(\ell^0,\ell)$.
\label{prob:problem-fixed}
\end{problem}
 This objective $E$ is  a measure of distortion between the initial metric $\ell^0$ and current one $\ell$.  If the mesh is a disk, and $\hat\Theta_i = 2\pi$ for all interior $i$, this reduces to a disk parametrization problem with prescribed angles on the boundary. 
 
Unfortunately, this problem may not have a solution for a fixed connectivity $M$ because  the optimization is done within the domain in $\bR^{|E|}$ defined by the strict positivity of edge lengths and triangle inequality constraints, i.e., the feasible domain does not include its boundary.   The constrained optimization problem may have a solution on the boundary of this domain, i.e., containing infeasible degenerate triangle configurations, suggesting that connectivity changes are necessary to find an optimal discrete metric with non-degenerate triangles. 

The approach we explore extends the domain of Problem~\ref{prob:problem-fixed} from the space of discrete metrics $(M, \ell)$ on a fixed connectivity $M$ to the space of \emph{all} cone metrics for a given topology of genus $g$ with a fixed set of vertices. 

\begin{problem}
Given \emph{target angles} $\hat\Theta_i$ (respecting the discrete Gauss-Bonnet theorem), compute a new cone metric $(M',\ell')$ with new triangle-mesh connectivity $M'$ with the same vertices and genus as $M$, for which the sums of angles  at vertices have values  $\hat\Theta_i$, while minimizing an objective $E( M, \ell^0, M',\ell')$.
\label{prob:problem-variable}
\end{problem}

\newcommand{\ve}[1]{#1}
\begin{algorithm}[b]
\setstretch{0.9}
\SetAlgoLined
\DontPrintSemicolon
\SetKwInOut{Input}{Input}
\SetKwInOut{Output}{Output}
\SetKwProg{Fn}{Function}{:}{}
\SetKwRepeat{Do}{do}{while}
\SetKw{Not}{not}
\Input{
    triangle mesh $M = (V,E,F)$, closed, manifold,\newline
    edge lengths $\ell = e^{\lambda/2} > 0$ satisfying triangle inequality,\newline
    target angles $\hat\Theta > 0$ respecting Gauss-Bonnet, a per edge distortion measure $E(\lambda)$
}
    \vspace{2pt}
\Fn{\scp{Optimize}$(M,\lambda,\hTh)$}{
Decompose $\lambda = Sx + Bu$\;
\For{ $i = 0, 1, ...$}{
    $\hat{\lambda} \gets S x$ \;
    $\lambda, e_1,...,e_n \gets$ \scp{FindConformalMetric}$(M, \hat{\lambda}^i, \hTh)$ \;
    $\tM_0 , \tl_0 \gets M, \lambda$ \;
    \For{ $j = 1,...,n$ }{
        $\tM_j, \tl_j, D_j \gets$ \scp{DiffPtolemyFlip}$(\tM_{j - 1}, \tl_{j - 1}, e_j)$ \;
    }
    $\tM, \tl \gets \tM_n, \tl_n$ \;
    $\alpha, \nabla_{\tl}\alpha \gets$ \scp{ComputeAnglesAndGradient}$(\tM,\tl)$ \;
    $\nabla_{\lambda} F \gets \Sigma \, \nabla_{\tl} \alpha \, \prod_{j=1}^n D_j$ \; 
    $\nabla_{x} E_S \gets \nabla_{\lambda} E (S - B(\nabla_{\lambda} F \, S)^{-1}(\nabla_{\lambda} F \, B))$ \;
    $x  \gets x -\beta \nabla_{x} E_S$\; 
    }
}
\caption{Angle constraint space coordinate gradient descent summary.}
\label{alg:grad-descent}
\end{algorithm}

The algorithms we develop for solving this problem also yield a map from the modified mesh $PL(M',\ell')$ to the original mesh $PL(M,\ell^0)$, where $PL(M,\ell)$ denotes the piecewise linear (PL) mesh associated with the combinatorial mesh $M$, with triangle edge lengths $\ell$.  Using this map, we can produce a refinement $PL(M',\ell')^r$ satisfying the input angle constraints at all original vertices of $M$, and flat at all inserted vertices.  Equivalently, this yields a $uv$ map from $PL(M,\ell^0)^r$ to the plane,  i.e., solve the standard parametrization problem with guarantees for angle constraints by allowing refinement. 

To solve this new  problem computationally using standard gradient-based methods,  we rely on four main ingredients: 
\begin{itemize}
\item Coordinates on the space of all cone metrics with a given topology and vertex set (Section~\ref{sec:penner}) that allow expressing suitable objectives and constraints.
\item Definition of the constraint space $\Theta_i(\ell) = \hat\Theta_i$, and  coordinates for the constraint space (Section~\ref{sec:projection}).
\item Objectives measuring distortion that are defined on the whole space of metrics (Section~\ref{sec:objectives}).
\item Computation of gradients of the objectives  with respect to the constraint-space coordinates (Section~\ref{sec:gradients}).
\end{itemize}

While these four components make the problem amenable to all standard gradient optimization methods, and the algorithms inherit the usual guarantees, e.g., for gradient descent or BFGS, we found that a substantial  acceleration can be achieved by using a coordinate-projection gradient descent
(Section~\ref{sec:projected}).

As a preview, we summarize the simplest form, gradient descent optimization, as Algorithm~\ref{alg:grad-descent}.

First, the input logarithmic edge lengths are decomposed into shear and scale components $x$ and $u$ by a linear transformation (line 2, Section~\ref{sec:projection}).

On every iteration of gradient descent, the gradient of the shear distortion measure $E_S$, which is the restriction of $E$ to metrics parameterized by the shear variables 
$x$ that serve as the free variables in the algorithm, is computed. 

The computation involves solving for a 
conformally equivalent metric $\tilde{\lambda}$ for the lengths inferred from the shears (line 5). 
The conformal mapping function  \cite{campen2021efficient} produces a new connectivity $\tilde{M}$, and a sequence of flips $e_j$, $j=0 \ldots n$, that lead to it.

For each flip, the function {\sc DiffPtolemyFlip} (line 8, Section~\ref{sec:penner} and Section~\ref{sec:gradients}) computes new lengths using the Ptolemy formula, and concurrently computes matrices $D_j$ corresponding to the derivatives of these transformations of lengths.

Angles and their gradients with respect to logarithmic lengths are computed in the new connectivity $\tilde{M}$ (line 10).

Finally, the gradient of the distortion $E_S$ is computed from the angle gradients $\nabla_{\tl}\alpha$, matrices $D_j$,
angle summation matrix $\Sigma$, and shear/scale decomposition matrices $B$ and $S$ (lines 11-12).  All these quantities are defined more precisely in subsequent sections. 

Finally the shear variable $x$ is updated (line 13). 

In the next sections, we describe each component separately.

\section{Penner coordinates for cone metrics} 
\label{sec:penner}

We first describe \emph{Penner coordinates} that we use as coordinates on the space of all flat cone metrics with a fixed topology of genus $g$ and set of cone vertices $V$. We denote this space by $\mathcal{C}_{g, V}$.  We start by defining an (almost) unique choice of triangulation for any metric in this space; this leads to partitioning of the space of metrics into Penner cells, each corresponding to a distinct choice of triangulation.  Then we show how discrete metric coordinates (edge lengths) on an arbitrary chosen cell can be translated to any other cell, leading to identification of the space of metrics with Euclidean space. 

\paragraph{Intrinsic Delaunay triangulations.}
For a given cone metric induced by a discrete metric $(M,\ell)$, the choice of triangulation  $M$ is  not unique, because, e.g., an intrinsic edge flip,  (Figure~\ref{fig:intrinsic}) does not change the cone metric.  
\begin{figure}
    \centering
    \includegraphics{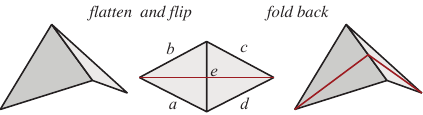}
    \caption{Intrinsic edge flip: two adjacent triangles are unfolded to the plane, and then a standard flip is performed; the flipped edge corresponds to a broken line on the original geometry.}
    \label{fig:intrinsic}
\end{figure}
However, for  any cone metric, there is an almost unique canonical choice of triangulation, specifically, the \emph{intrinsic Delaunay triangulation} \cite{bobenko2007discrete,fisher2007algorithm}. 

On a surface with cone metric,  a triangulation can be defined as an embedded graph with edges corresponding to non-intersecting geodesics connecting vertices (cones), partitioning the surface into triangular domains. 
As the metric is flat away from vertices, each such domain is isometric to a triangle. Then the intrinsic Delaunay condition for an intrinsic edge $e$ is defined as the standard condition $\alpha + \beta \leq \pi$  on triangle angles $\alpha$ and $\beta$ opposite $e$ in two incident triangles.  

\paragraph{Penner cells.}
For a fixed connectivity, $\mathcal{C}_{g, V}$ has natural coordinates, the edge lengths, as long as the resulting triangulation stays Delaunay. 
\begin{definition}
For a fixed connectivity $M = (V,E,F)$,  we  consider all possible lengths assignments to the edges that define cone metrics for which mesh $M$ is Delaunay. The set of cone metrics defined by these lengths assignments is called the \emph{Penner cell} $\cP(M) \subset \mathcal{C}_{g, V}$.
The coordinates for metrics contained in this set are given by the $|E|$ lengths on the edges of $M$.
\end{definition}
Note that Penner cells are closed. Moreover, a hyperface
$F(T_1, T_2)$ of the  boundary of a Penner cell consists of all metrics such that, for a pair of adjacent triangles $(T_1, T_2)$ in $M$, their four vertices are co-circular with respect to the metrics. 
Observe that the  connectivity $M'$ obtained by flipping the common edge of $T_1$ and $T_2$ is also Delaunay for the same metric, i.e., $F(T_1, T_2)$ is shared by  Penner cells $\cP(M)$ and $\cP(M')$. 

The collection of Penner cells covers the whole space $\mathcal{C}_{g, V}$ of cone metrics \cite{springborn2019ideal}.
Unlike the set of discrete metrics for fixed connectivity, this set, as we show below, can be identified with $\bR^{|E|}$. 

\begin{definition}
If $\ell$ is a choice of lengths for the mesh $M$,
for which $M$ is not necessarily Delaunay, we 
define a \emph{function $\del$}, mapping $M$ to a different connectivity $M'$
with lengths $\ell'$ with the same vertices and number of edges,  and  such that $M'$ is Delaunay with respect to $\ell'$ and defines the same metric.
\[
\del(M,\ell) = (M',\ell')
\] 
\end{definition}
$\del(M,\ell)$ can be obtained from $(M,\ell)$ by the standard flip algorithm for Delaunay triangulation. 
We also denote the Delaunay connectivity $M'$ for the metric $(M,\ell)$ by $M_{\del}(M,\ell)$.  

This cell partitioning of the space of metrics, with individual edge length coordinates on each cell, already allows for unconstrained optimization in the space of metrics using the standard optimization-on-manifold approaches. For example,  for a line search, typically used in gradient-based optimization method in a direction $d$ at a point inside a cell, one can move along $d$ in the local  coordinates in the cell  $\cP(M)$  until we reach the cell boundary separating it from a connectivity $\cP(M')$. Then we change triangulation and length coordinates along the line, to that of an adjacent cell, using an intrinsic flip. 
This however is undesirable for more complex situations, as the step of optimization is limited by the distance 
to the nearest Penner cell boundary, which can be very small. Even more significantly, distortion measures that change continuously, let alone smoothly, with respect to edge flips are rare:  the energy used to compute conformal maps is unusual in this respect.  
To be able to navigate over the whole space of metrics and extend distortion measure definitions to  all of  $\mathcal{C}_{g, V}$, we introduce \emph{global} Penner coordinates on the whole space of metrics. 

\paragraph{Ptolemy transition maps.}
The transition between two charts for a metric contained in the hyperface separating two Penner cells, amounts to  applying the \emph{Ptolemy formula}, allowing us to compute the
length of the flipped edge if the pair of triangles incident at the edge have co-circular vertices. 
Removing $e$ and inserting  the flipped edge $e'$ in a pair of adjacent triangles with external edges  $\ell_a,\ell_b,\ell_c,\ell_d$  corresponds to the edge length update
\[
\ell'(e') = \frac{\ell(a)\ell(c) + \ell(b)\ell(d)}{\ell(e)}, 
\]
and  $\ell'(f)  = \ell(f)$ for all edges $f \neq e$.

This \emph{transition map} $\tau(M,M'): \bR^{|E|} \rightarrow \bR^{|E|}$ between length coordinates for cells $\cP(M)$ and $\cP(M')$ is a critical  element of our construction. 

We make use of the following observations, summarized in a proposition.

\begin{prop}
\begin{enumerate}
\item The transition maps $\tau(M,M')$ are  \emph{smooth} (in fact analytic), as a consequence, the atlas formed by these coordinate charts on the space of metrics is $C^\infty$; 
\item The transformations between coordinates on non-adjacent cells,  connected by a sequence of flips of edges $e_1,e_2,\ldots e_n$, is given by the  composition $\tau_n\circ\tau_{n-1}\ldots \tau_1$.  
This change of coordinates can be done from any cell to the cell $\cP(M)$ that corresponds to the initial metric  $(M,\ell^0)$ \emph{however,  the resulting values $\ell(e)$, obtained by a sequence of Ptolemy flips,  need not satisfy triangle inequalities}
\item  $\tau(M, M')$ does not depend on the sequence of cells used to construct the map. 
\end{enumerate}
\label{prop:summ}
\end{prop}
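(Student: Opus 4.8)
The plan is to dispatch the three parts in order, since (1) and (2) are essentially formal once the transition map is written out, and (3) carries the real content. For part (1), I would just observe that the Ptolemy update $\ell'(e') = (\ell(a)\ell(c) + \ell(b)\ell(d))/\ell(e)$ is a rational function of the edge lengths whose denominator is a single edge length, hence strictly positive on the space of metrics, while its numerator is a sum of products of positive numbers and so is positive; the other coordinates are unchanged. Thus $\tau(M,M')$ is the restriction of a globally defined real-analytic self-map of the open positive orthant $\bR^{|E|}_{>0}$, in particular $C^\infty$. The atlas is then $C^\infty$ (indeed real-analytic) because every pairwise chart transition is such a map or, for cells meeting only along strata of codimension $\ge 2$, a finite composition of such maps by part (2), and analyticity survives composition. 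For part (2), I would induct on the length $n$ of the flip sequence using the cocycle property of chart transitions: if $M \to M_1 \to \cdots \to M_{n-1} \to M'$ is realized by flips $e_1,\dots,e_n$, then on the region where all intermediate length coordinates are defined the change of coordinates $\cP(M)\to\cP(M')$ factors as $\tau(M_{n-1},M')\circ\tau(M,M_{n-1})$, and the inductive hypothesis expands the second factor into $\tau_{n-1}\circ\cdots\circ\tau_1$. Since any cell can be joined to $\cP(M_0)$ by a finite flip sequence, this composite exists; the caveat that the resulting $\ell(e)$ may violate triangle inequality is simply the statement that these compositions are taken as maps of $\bR^{|E|}$, not of the triangle-inequality sub-domains.

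The heart of the proposition is part (3). Given two flip sequences $S_1,S_2$ from $M$ to $M'$, it is equivalent to show that following $S_1$ and then $S_2$ in reverse induces the identity on $\bR^{|E|}_{>0}$. I would argue in two steps. First, on the open subset $W \subset \bR^{|E|}_{>0}$ of length vectors for which every intermediate configuration along both $S_1$ and $S_2$ is a genuine discrete metric (all triangle inequalities strict), each flip is an honest intrinsic edge flip, and the Ptolemy formula then returns the true geodesic length of the new edge; hence either sequence records the geodesic lengths of one fixed cone metric in the connectivity $M'$, a quantity depending only on the metric and on $M'$. So $\tau_{S_1} = \tau_{S_2}$ on $W$. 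Second, by part (1) both composites are real-analytic on the connected set $\bR^{|E|}_{>0}$ and they agree on the nonempty open set $W$, so they agree everywhere.

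The step I expect to be the main obstacle is confirming that $W$ is nonempty, i.e. exhibiting a single cone metric for which every flip appearing in $S_1$ or $S_2$ is performed across a strictly convex developed quadrilateral. I would handle this by a perturbation/scaling argument — strict convexity of each developed quad is an open condition in the $|E|$ length parameters, and one arranges them simultaneously using this freedom (starting, say, from a near-flat metric) — or, more robustly, replace the analytic-continuation argument with a combinatorial one: the flip graph of triangulations of a fixed surface with a fixed vertex set is simply connected with $2$-cells given by the \emph{square} relation ($\tau_e^2 = \mathrm{id}$), the \emph{commutation} relation (disjoint flips commute), and the \emph{pentagon} relation, so it suffices to check that each of these induces the identity on edge lengths; the square follows by substituting the Ptolemy formula into itself, commutation is immediate because the formulas touch disjoint edges, and the pentagon is a standard Plücker-type identity for lambda-lengths. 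Either route closes part (3), and with it the proposition.
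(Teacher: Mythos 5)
Your treatment of parts (1) and (2) is fine and matches the paper's (one-line) justification: $\tau$ is a rational map with positive numerator and a single positive edge length in the denominator, so smoothness and the composition formula are formal. The problem is the primary argument you give for part (3). On your open set $W$ you claim that, because each intermediate flip is an honest intrinsic edge flip, ``the Ptolemy formula then returns the true geodesic length of the new edge.'' It does not. For two Euclidean triangles unfolded into a planar quadrilateral, Ptolemy gives only the inequality $\ell(e)\ell(e') \le \ell(a)\ell(c)+\ell(b)\ell(d)$, with equality exactly when the four vertices are co-circular --- i.e.\ exactly on the common hyperface of the two Penner cells. Off that codimension-one locus the true geodesic length of the flipped diagonal is given by a law-of-cosines computation, not by the Ptolemy quotient; this is precisely why the paper emphasizes that $P_{M_0}(M,\ell)$, ``while formally obtained using the Ptolemy formula, are \emph{not} lengths.'' Consequently the two flip sequences do \emph{not} both record geodesic lengths of a single fixed cone metric on $W$, there is no open set on which the two composites are known to agree, and the analytic-continuation step has nothing to continue. (Your worry about whether $W$ is nonempty was aimed at the wrong obstacle.)

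Your fallback route does work, and it is genuinely different from the paper's. The paper disposes of (3) by invoking Penner's theorem that the Ptolemy transformation is the change-of-triangulation rule for lambda-lengths of a fixed decorated ideal hyperbolic structure, so the coordinates in any triangulation are intrinsic to that structure and independent of the flip path. Your combinatorial alternative --- reduce path-independence to the square, commutation, and pentagon relations via the presentation of the flip (Ptolemy) groupoid, then verify each relation by direct substitution into the Ptolemy formula --- is sound: the square is the immediate computation $\bigl(\ell(a)\ell(c)+\ell(b)\ell(d)\bigr)/\ell'(e')=\ell(e)$, commutation holds because non-adjacent flips touch disjoint coordinate entries, and the pentagon is the standard lambda-length identity. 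Be aware, though, that this route leans on the nontrivial fact that the flip complex is connected and simply connected with exactly those $2$-cells; you should cite that presentation theorem (Penner, Harer) rather than assert it. In exchange, your route is purely algebraic and avoids importing the hyperbolic interpretation that the paper relies on; the paper's route is shorter but requires the full machinery of decorated Teichm\"uller theory. If you keep your write-up, drop route 1 entirely or demote it to a remark valid only on the co-circular locus.
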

Except for the last statement, these observations are  direct consequences of the definition of $\tau$ as 
a simple rational function on lengths.  The last  statement directly follows from the fact that Ptolemy
flips in fact preserve an ideal hyperbolic metric on the  surface \cite{penner1987decorated}.

\paragraph{Penner coordinates.}
If we fix an arbitrary connectivity $M_0$, we can extend the discrete length coordinates on this cell to coordinates for the whole space of metrics. We refer to this extension as Penner coordinates with respect to $M_0$.

\noindent

\begin{mdframed}
\begin{definition}
Penner coordinates for a cone metric with length coordinates $(M,\ell)$ in Penner cell $\cP(M)$, with respect to
$M_0$ is a vector $P_{M_0}(M, \ell)$ of positive numbers in $\bR^{|E|,+}$ defined as  
\[
P_{M_0}(M, \ell)  = \tau(M,M_0)(\ell).\] 
i.e., simply the coordinate change from $\cP(M)$ to $\cP(M_0)$ by a composition of Ptolemy formulas.  
\end{definition}
\end{mdframed}

\begin{figure}
    \centering
    \includegraphics[width=2.5in]{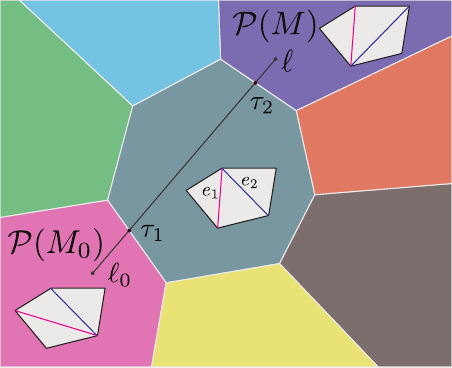}
    \caption{A schematic illustration of transitions between different Penner cells.}
    \label{fig:transition_map}
\end{figure}

Following discrete conformal map theory,  we use logarithms  $\lambda = 2\ln \ell$ rather than Penner coordinates themselves to eliminate the  positivity constraints; the factor 2 is introduced to simplify some expressions.

We  emphasize that $P_{M_0}(M, \ell)$, while formally obtained using the Ptolemy formula, are \emph{not} 
lengths: unless $M=M_0$, these are not guaranteed to satisfy triangle inequality.  Ptolemy formula yields Euclidean lengths only if the flip is performed on a pair of triangles with co-circular vertices. 

\begin{prop}
Logarithmic Penner coordinates $P_{M_0}(M,\ell)$ define a
bijection between the space of cone metrics and the Euclidean space $\bR^{|E|}$.
\end{prop}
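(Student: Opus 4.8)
The plan is to establish that $P_{M_0}$ is a bijection from the space of cone metrics with vertex set $V$ (and fixed genus) onto $\bR^{|E|,+}$; composing with the componentwise bijection $\lambda\mapsto 2\ln\lambda$, $\bR^{|E|,+}\to\bR^{|E|}$, then yields the stated claim for logarithmic Penner coordinates. I would proceed in three steps: first check that $P_{M_0}$ is well defined on metrics (independent of the choice of Delaunay chart); then produce an explicit inverse $Q:\bR^{|E|,+}\to\{\text{cone metrics}\}$ built from the Delaunay flip map $\del$; and finally verify $Q\circ P_{M_0}=\mathrm{id}$ and $P_{M_0}\circ Q=\mathrm{id}$.

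\emph{Well-definedness.} A cone metric generically has a unique intrinsic Delaunay triangulation, but on a shared hyperface of two Penner cells it is represented by several Delaunay pairs $(M,\ell)$, $(M',\ell')$. On such a hyperface the four vertices of the flipped quad are co-circular, so by Ptolemy's theorem for cyclic quadrilaterals the Ptolemy update produces exactly the geodesic length of the other diagonal; hence $(M',\ell')=\del(M,\ell)$, and any two Delaunay representatives of one metric are connected by a sequence of Ptolemy flips. By Proposition~\ref{prop:summ}(3) the composite transition to $\cP(M_0)$ takes the same value on all of them, so $P_{M_0}$ descends to a well-defined map on metrics.

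\emph{Inverse and bijectivity.} Given $u\in\bR^{|E|,+}$, apply $\del$ to the formal data $(M_0,u)$ — repeatedly Ptolemy-flip edges that violate the Delaunay inequality — obtaining $(M',u')=\del(M_0,u)$ with $M'$ Delaunay for $u'$. The crucial claim is that $u'$ then satisfies every triangle inequality, so $(M',u')$ is an actual cone metric $Q(u)$. Granting this, $P_{M_0}(Q(u))=\tau(M',M_0)(u')=u$, since the flip sequence from $M'$ back to $M_0$ inverts the one used to compute $\del$ and path-independence (Proposition~\ref{prop:summ}(3)) permits using exactly that sequence. Conversely, if $m$ is a metric with Delaunay representative $(M,\ell)$ (here $\ell$ are genuine lengths) and $u:=P_{M_0}(m)=\tau(M,M_0)(\ell)$, then applying $\del$ to $(M_0,u)$ traces the flips back to $(M,\ell)$ by path-independence; since its output is Delaunay and the intrinsic Delaunay triangulation of $m$ is (generically) unique, $Q(u)=(M,\ell)=m$. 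Hence $Q$ is a two-sided inverse and $P_{M_0}$ a bijection.

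\emph{Main obstacle.} The one substantial ingredient is the claim above that $\del$ applied to an arbitrary formal vector $(M_0,u)$ both terminates and returns triangle-inequality-satisfying lengths — equivalently, that every point of $\bR^{|E|,+}$ is realized by an honest Euclidean cone metric. I would obtain this from Penner's decorated Teichm\"uller theory and the Minkowski light-cone model, as developed in \cite{gu2018discrete} and \cite{springborn2019ideal}: $u$ is the vector of $\lambda$-lengths of a unique decorated hyperbolic cusp metric on the punctured surface, whose canonical (Delaunay) ideal cell decomposition is, outside a measure-zero set, a triangulation whose $\lambda$-lengths coincide with the Euclidean edge lengths of the associated cone metric, while termination of the Ptolemy flip algorithm is proved in the same sources. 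In the degenerate case where the canonical decomposition has non-triangular cells, one subdivides arbitrarily, the choice being immaterial by the co-circular Ptolemy argument used for well-definedness. Finally, the overall scale of a metric is not quotiented out on either side (uniform scaling of $\ell$ is a constant additive shift of $\lambda$), consistent with the target being all of $\bR^{|E|}$.
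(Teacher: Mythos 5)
Your proposal is correct and follows essentially the same route as the paper: well-definedness via path-independence of the Ptolemy transition maps (Proposition~\ref{prop:summ}), injectivity via invertibility of the flips, and surjectivity by running the Delaunay flip algorithm (Weeks' algorithm) on an arbitrary positive coordinate vector, with the termination and triangle-inequality guarantees deferred to the decorated hyperbolic metric theory. Your treatment is somewhat more careful than the paper's --- notably the co-circular Ptolemy argument for well-definedness on shared Penner cell hyperfaces and the explicit two-sided-inverse check --- but the key ingredients are identical.
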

\begin{proof}
As this is a key fact we use in this paper, we summarize a proof for completeness.
It is  well-known that any two triangle meshes with the same topology are related by a sequence of flips. 
Then for any metric,  Penner coordinates are defined, and  uniquely by independence 
of the coordinates obtained from the sequence of flips 
(Proposition~\ref{prop:summ}) from $M$ to $M_0$.
Thus the map from the metrics to Penner coordinates w.r.t. $M_0$
is well-defined. As Ptolemy flips are invertible, 
the map from metrics to $\bR^{|E|}$ defined by Penner coordinates is also injective. 

Conversely, if we prescribe any set of coordinates $\lambda$
on a given connectivity $M_0$, Weeks' algorithm \cite{weeks1993convex} discussed below, can be used
 to find a connectivity $M'$ and assignments $\lambda'$
 for which the metric defined by $\exp(\lambda'_e / 2)$
 for all edges $e$ is Delaunay, and in particular satisfies triangle inequality.  Thus the map from metrics to $\bR^{|E|}$
 is surjective, i.e., a bijection.    
\end{proof}

\paragraph{Example.} Figure~\ref{fig:penner_cells_three} shows a simple example 
of Penner cell partition of the space of metrics with three vertices and sphere topology. 
There are two possible topological triangulations of three points (there is no isometric embedding of these metric in 3D, but these are valid flat metrics that may arise, e.g, when parametrizing a sphere with two curved triangles). 

Although $|E|=3$ and the space of metrics in this case is 3-dimensional, due 
to scale invariance, the structure of the Penner partition can be shown in 2D  by restricting logarithmic edge length to sum up to zero.  The curves correspond to Delaunay inequalities  becoming equalities. 

\begin{figure}[htb!]
    \centering
    \includegraphics[width=3in]{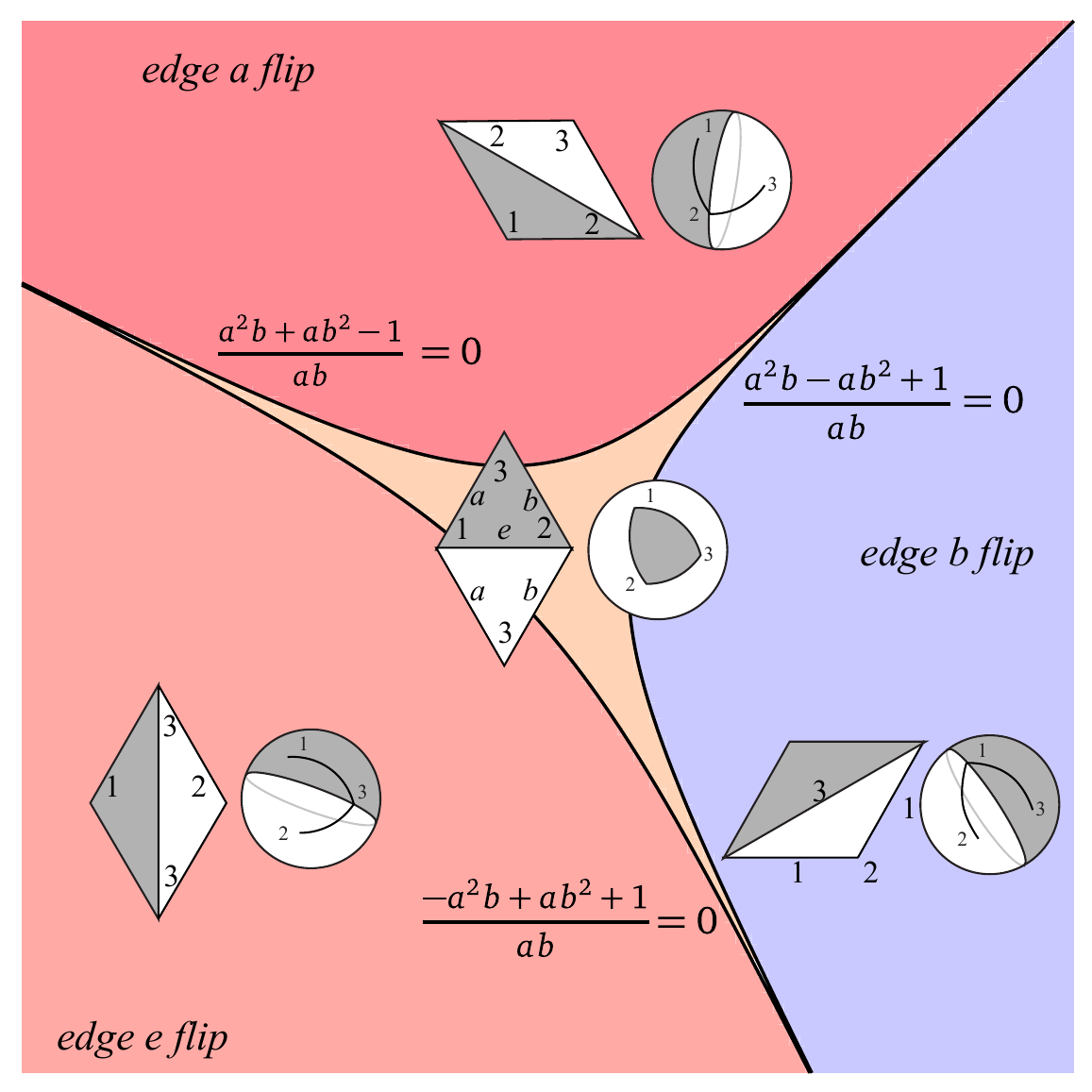}
    \caption{Example of a Penner cell partition of the space of cone metrics with 3 vertices and 3 edges. It has four cells, corresponding to mesh connectivities with vertex degrees (2,2,2), and 3 versions with degrees (1,1,4), corresponding to three possible flips.     The cells are shown in logarithmic coordinates $\ln a$, $\ln b$, in the plane with equation $\ln a + \ln b+ \ln e = 0$. 
    We show two triangles of each configuration laid out in the plane after cuts along two edges, and on a sphere as curved triangles, to illustrate the connectivity of the edge graph more explicitly. 
    }
    \label{fig:penner_cells_three}
\end{figure}

\paragraph{Computing Penner coordinates for a metric.}  
While Penner coordinates form a convenient parameterization of the whole space of cone metrics, in general,
we cannot compute standard geometric quantities directly from 
these coordinates, as the triangle inequality is not satisfied. However, we can do this if we use coordinates 
with respect to the Penner cell $\cP(M)$ containing the metric: in this cell, the mesh $(M,\ell)$ satisfies  Delaunay
conditions, and hence triangle inequality \cite{weeks1993convex}.  To perform this change of coordinates, we need to apply the inverse transition maps $\tau(M_0,M)(\ell^0)$, which requires knowing the Delaunay connectivity $M$, and  the sequence of flips leading from $M_0$ to $M$.  Neither may be known: e.g., if in the process of optimization 
we update our coordinates $\lambda + \Delta \lambda$, we have no way of knowing in which Penner cell we landed. 

However, remarkably, one can apply the usual  Delaunay triangulation algorithm to $\ell^0$, 
despite the fact that triangle inequality is not satisfied; in this form, the algorithm is known as \emph{Week's algorithm} \cite{weeks1993convex}. This fact follows from the equivalence of  ideal decorated hyperbolic metrics and Euclidean metrics on triangles  discussed in detail in numerous papers; for a brief summary,  please see \cite{gillespie2021discrete} and \cite{campen2018seamless}. 

The "ideal hyperbolic" Delaunay condition for two triangles as in Figure~\ref{fig:intrinsic} has exactly the same form as the 
standard Delaunay written in terms of cosines: 
$\cos(\alpha) + \cos(\beta) \geq 0$, where $\alpha$ and $\beta$ are 
two angles opposite an edge,  but with cosines replaced with 
cosine law formulas, which do not require triangle inequality to 
be satisfied.
\begin{definition}
The edges of two triangles with edges $e,a,b$ and $e,c,d$ respectively, sharing an edge $e$, with positive numbers 
$\ell(a)$,$\ldots$,$\ell(e)$ assigned to edges, satisfy the ideal 
hyperbolic Delaunay condition if
\begin{equation}
\frac{\ell(a)^2 + \ell(b)^2-\ell(e)^2}{2\ell(a)\ell(b)} + 
\frac{\ell(c)^2 + \ell(d)^2-\ell(e)^2}{2\ell(c)\ell(d)} \geq 0 
\label{eq:ideal-delaunay}
\end{equation}
\end{definition}
The standard flip algorithm (repeatedly flip any edge that does not satisfy the condition, until none are left, using the Ptolemy formula)  in this case is called \emph{Week's algorithm} \cite{weeks1993convex}. It has the following property:

\begin{prop}
For any starting Penner coordinates, Week's algorithm produces 
a triangulation  $M$ satisfying the Delaunay condition; moreover, 
the resulting  $(M, \ell)$ satisfy triangle inequalities.
\end{prop}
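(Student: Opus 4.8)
The plan is to read Week's algorithm geometrically and then reduce both assertions to known facts about decorated ideal hyperbolic surfaces. First I would invoke the dictionary already used implicitly in Proposition~\ref{prop:summ} and in the proof of the bijection proposition above: a vector of Penner coordinates on a connectivity $M_0$ determines a \emph{decorated ideal hyperbolic surface}, in which the positive numbers $\ell(e)=\exp(\lambda_e/2)$ are the Penner $\lambda$-lengths of the edges, the quantities $\frac{\ell(a)^2+\ell(b)^2-\ell(e)^2}{2\ell(a)\ell(b)}$ are the (hyperbolic) corner quantities of the associated ideal triangles, and condition \eqref{eq:ideal-delaunay} is exactly the intrinsic Delaunay condition for that decorated surface. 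Under this dictionary a Ptolemy flip is a genuine geometric edge flip of an ideal triangulation that leaves both the hyperbolic metric and the decoration unchanged (Proposition~\ref{prop:summ}(3); \cite{penner1987decorated}). Hence Week's algorithm \emph{is} the intrinsic Delaunay flip algorithm for this decorated surface, and the termination claim — that after finitely many flips no edge violates \eqref{eq:ideal-delaunay} — is the corresponding statement for decorated hyperbolic surfaces, which I would obtain from the standard convex-hull argument: in the universal cover the decorated surface lifts to points on the light cone of Minkowski space $\bR^{2,1}$, each repair of a non-Delaunay edge strictly decreases a convex functional bounded below (equivalently pushes the lifted polyhedral surface strictly toward the fixed lower convex hull of those points), and the usual argument then shows the algorithm halts; by construction the triangulation $M$ at which it halts is Delaunay. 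One may alternatively cite termination of the hyperbolic/Ptolemy Delaunay flip algorithm directly from \cite{gu2018discrete,springborn2019ideal,gillespie2021discrete}.

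For the ``moreover'' part I would use the polyhedral realization of the Delaunay \emph{decomposition}: the cell complex obtained from $M$ by merging across every edge for which \eqref{eq:ideal-delaunay} is an equality is the projection of the lower boundary of the Minkowski convex hull above, and its faces lie in \emph{spacelike} planes, so each face carries a flat Euclidean metric and is realized as an honest convex Euclidean polygon inscribed in a circle. The Delaunay triangulation $M$ produced by Week's algorithm refines this decomposition, i.e.\ it triangulates each of these convex Euclidean polygons; therefore every triangle of $M$ is a genuine Euclidean triangle, and its side lengths — which are precisely the $\lambda$-lengths $\ell(e)=\exp(\lambda_e/2)$ — satisfy the triangle inequality. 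This is in fact the same fact used for the surjectivity direction in the proof of the bijection proposition, and it can also be quoted verbatim from \cite{springborn2019ideal} or \cite{gu2018discrete}.

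The step I expect to be the main obstacle is exactly this local-to-global leap in the ``moreover'' part. Condition \eqref{eq:ideal-delaunay} is a collection of per-edge inequalities, and it is \emph{not} true edge-by-edge that a Delaunay edge bounds triangles with valid lengths: one can exhibit a five-length configuration $(\ell(a),\ell(b),\ell(c),\ell(d),\ell(e))$ satisfying \eqref{eq:ideal-delaunay} in which the triangle $(a,b,e)$ violates the triangle inequality (this happens when $e$ is ``too short'' relative to $|\ell(c)-\ell(d)|$ so that the second cosine term exceeds $1$ and compensates). What rules this out globally is precisely that all these configurations fit together into a single decorated hyperbolic surface whose Delaunay decomposition is realized by genuine convex Euclidean polygons — a structural fact, not a computation. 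I would therefore present the argument as a reduction to that fact and to the known termination of the hyperbolic Delaunay flip algorithm, rather than as a self-contained combinatorial proof; the remaining bookkeeping (that Week's algorithm only ever flips edges of the current triangulation, that the decoration is genuinely preserved along the flip sequence, and that merging the Delaunay-flat edges yields the convex-hull projection) is routine given the cited results.
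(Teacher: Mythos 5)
Your argument is correct and takes essentially the same route as the paper, which states this proposition without proof and justifies it only by the equivalence between decorated ideal hyperbolic metrics and Euclidean cone metrics, citing \cite{weeks1993convex} and \cite{gillespie2021discrete} --- exactly the reduction you carry out (Epstein--Penner light-cone convex hull for termination, and realization of the Delaunay cells as Euclidean polygons inscribed in circles, whence the triangle inequality; your remark that the latter is a global structural fact and not a per-edge consequence of \eqref{eq:ideal-delaunay} is correct and worth keeping). The one step where your sketch leans on the citations rather than the stated mechanism is termination: a strictly decreasing functional bounded below does not by itself give finitely many flips, since a punctured surface admits infinitely many isotopy classes of triangulations, so the termination proofs in \cite{gu2018discrete,springborn2019ideal} are doing real work there rather than serving as an optional alternative.
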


In other words, it produces exactly the sequence of flips 
leading to the Penner cell corresponding to the metric with 
Penner coordinates $\ell$, and allows us to define the map 
$\del$ to arbitrary Penner coordinates, with the following property
 \[
\del(M_0, P_{M_0}(M,\ell)) = (M,\ell) 
\]

In summary, we defined a global coordinate system on the space of 
cone metrics with a vertex set $V$ which have fixed topology
(it follows from this that the number of edges is fixed), 
bijectively and continuously mapping it to $\bR^{|E|}$;  one can convert these coordinates to standard Euclidean lengths on a connectivity for which the mesh is Delaunay using Week's algorithm. 

\section{Angle constraint manifold and shear coordinates}
\label{sec:projection}

Fixing vertex angles imposes $|V|-1$ constraints on lengths, with one degree of freedom at an arbitrary vertex $v_0$ being redundant due to Gauss-Bonnet formula, and the dimension of the manifold of 
metrics with given vertex angles is $|E|-|V|+1$.
These constraints are nonlinear and nonconvex, so the manifold of metrics for any prescribed set of angles is not defined directly as a "nice" (e.g., convex) subset of $\bR^{|E|}$. 
However, we show that there is a \emph{linear} coordinate change on the logarithmic Penner coordinates 
that  defines a \emph{global} parametrization of the constraint manifold with $|E|-|V| + 1$ parameters (shear coordinates). 

\paragraph{Discrete conformal metrics.}  An essential component of our construction is the computation of a discrete conformal metric with prescribed angles. 

\begin{definition}
Two  cone metrics $(M,\ell)$ and $(M',\ell')$ with the same vertices are discretely conformally equivalent if 
there is a sequence of metrics 
$(M,\ell) = (M^0,\ell^0), \ldots (M^n,\ell^n)=(M',\ell')$ such that
\begin{itemize}
\item $M^m$ is Delaunay w.r.t., $(M^m,\ell^m)$;
\item $M^{m+1}=M^m$ and $\ell^{m+1}$ is obtained by scaling lengths $\ell^m$ 
as
\[\ell^{m+1}_{ij} = \ell^m_{ij}e^{(u^m_i + u^m_j)/2}\]
where $u^m_i$ are log scale factors assigned to vertices.
\item or $M^m$ and $M^{m+1}$ are two distinct Delaunay 
triangulations with the same cone metric. 
\end{itemize}
\end{definition}
Conformal changes of metric are parameterized by  logarithmic scale factors $u$,  i.e., have exactly one degree of freedom per vertex. As a consequence, the constraints on angles at vertices, as there are $|V|-1$ of these fully determine  the solution (up to a global scale factor).

In logarithmic variables, and fixed connectivity, the relation between lengths  and scale factors has a particularly convenient linear form 
\begin{equation}
\lambda_{ij} = \lambda_{ij}^0 + u_i + u_j
\label{eq:logconf}
\end{equation}
or, in matrix form, $\lambda = \lambda^0 + Bu$ for a matrix $B$ that only depends on the combinatorial structure of $M$. 

Moreover, this solution can be obtained by solving a convex minimization problem, \cite{springborn2008conformal}, 
which can be solved robustly, e.g., using the algorithms and software described in \cite{campen2018seamless}.
The convex energy $\cE(M,\ell,u,\hat\Theta)$ used to compute the scale factors $u$, by minimizing it with respect to $u$, is naturally formulated in terms of logarithmic lengths  $\lambda_{ij} = 2\ln \ell_{ij}$, and vertex log scale factors $u_i$. 

The condition for the minimum of $\cE$,  $\nabla_u \cE = 0$, is exactly the angle constraint, which we express as follows, in terms of Penner coordinates with respect to a reference connectivity $M_0$:

\begin{mdframed}
\begin{equation}
F(\lambda) =  \asum \alpha(\del(M_0,\lambda))-\hat{\Theta} = 0; \mbox{(Angle Constraint manifold)}
\label{eq:constrained_manifold}
\end{equation}
\end{mdframed}

where $\lambda = \lambda^0 + B u$, $\alpha(\lambda)$ is the vector of size $3|F|$ of angles of all triangles as functions of log lengths satisfying triangle inequalities, and $\asum$ is a $(|V|-1) \times {3|F|}$ matrix summing angles around each vertex (except one). 
We write the unique solution of Equation~\ref{eq:constrained_manifold}  with respect to $u$ as $u(\lambda^0)$, so that $F(\lambda^0 + Bu(\lambda^0)) = 0$, i.e., 
\begin{wrapfigure}{l}{0.25\linewidth}
\begin{overpic}[width=1.3\linewidth]{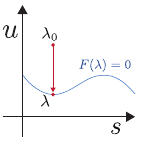}
\end{overpic}
\end{wrapfigure}
$u(\lambda^0)$ are the scale factors deforming the initial metric 
$\lambda^0$ so that the angle constraints are satisfied. 

\paragraph{Shear coordinates.}
The formulas above define a projection $\lambda^0 \rightarrow \lambda^0 + Bu(\lambda^0)$ to the constraint manifold.   Next, we show that a simple \emph{linear change} of logarithmic Penner coordinates defines a global parametrization of the constraint manifold, $\lambda_S(x)$, such that $F(\lambda_S(x))=0$ 
for any $x$. One can visualize the manifold  $F(\lambda) =0$ being a "graph" of a $(|V|-1)$-dimensional function (the vector of free scale factors) over a linear subspace of dimension $|E|-|V| + 1$.

Consider the column space $\mathcal{B} \subset \bR^{|E|}$ of matrix $B$, and the orthogonal complement of $\mathcal{B}$, which we denote $\mathcal{S}$; for reasons that will become clear, we refer to this latter space as the \emph{shear subspace}.  

Let $S$ be a full-rank matrix with columns spanning $\mathcal{S}$.  For now, we do not choose a specific basis for $\mathcal{S}$, possible choices are discussed below. Consider the change of coordinates 
$\lambda \rightarrow [u,x]$, defined by 

\[
\lambda = B u + S x = [B\; S] \left[\begin{array}{c}  u\\ x\\\end{array}\right]
\]
where $x$ are coefficients in the basis of columns of $S$.

As observed above, for any $\lambda = S x$, there is a unique $u(Sx) \in \bR^{|V| - 1}$ such that $F(S x + B u(Sx) ) = 0$ 
\vspace{0.5em}
\begin{mdframed}
\begin{equation}
\lambda_S(x) = Sx + B u(Sx) 
\label{eq:param-manifold}
\end{equation}    
\end{mdframed}

Note that for any $\lambda$ in the constraint manifold $\mathcal{M}_{\Theta} \subset \bR^{|E|} = \mathcal{S} \oplus \mathcal{B}$, we can apply the coordinate transformation $\lambda = S x+ B u$ to obtain $x$.

We conclude that the following proposition holds: 
\begin{prop}
    A metric with Penner coordinates $\lambda$  belongs to the angle constraint manifold 
    $F(\lambda) = 0$ if and only if  it has the form \eqref{eq:param-manifold} for a choice of $x \in 
    \mathbb{R}^{|E|-|V| + 1}$.
\end{prop}

\paragraph{Basis for the shear subspace.}
We now choose an explicit basis $\{\lambda^{\perp, ij}\}$ for $\mathcal{S}$.

\begin{wrapfigure}{l}{0.25\linewidth}
    \centering
    \includegraphics{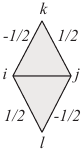}
    \label{fig:C_basis_vector}
\end{wrapfigure}

Consider two adjacent triangles $(i,j,k)$ and $(j,i,l)$, and 
assign the values for $\lambda^{\perp,ij}$ to $1/2$ for $jk$ and $li$, $-1/2$ to $ki$ and $jl$, and zero to all other edges. This set of vectors is clearly orthogonal to $\mathcal{B}$; they also have a reasonable geometric interpretation, specifically, \emph{shears (logarithms of edge cross-ratios) can be obtained as 
 $\shear_{ij} = \lambda^{\perp,ij} \cdot \lambda$}, where shears are defined as in \cite{bobenko2015discrete} by:
    \[
    \shear_{ij} = \ln \frac{\ell_{jk}\ell_{il}}{\ell_{ki}\ell_{jl}} = 
    \frac{1}{2}(\lambda_{jk} + \lambda_{il}- \lambda_{ki}-\lambda_{jl})
    \]
However, the set of vectors $\{\lambda^{\perp, ij}\}_{ij \in E}$ is linearly dependent: for any vertex $i$,
\[
\sum_{(ij) \in E } \lambda^{\perp,ij} = 0
\]
This condition is  just another form of the condition that the product of cross-ratios around a vertex is one. 
To obtain an independent set, we need to eliminate one basis vector per vertex. 
This can be done, e.g., by removing vectors corresponding to  a spanning tree of vertices along with one remaining edge that leaves the remaining $|E| - |V|$ edges connected (see appendix for a proof).  Form the matrix $C$ from the remaining columns $\lambda^{\perp,ij}$.  Note that $C \lambda$ gives the logarithms of shears of the selected edges, so the transposed pseudo-inverse $A_s = C (C^T C)^{-1} $ has the property, for any $u \in \mathbb{R}^{|V|}$, that
\[
C^T (B u + A_S s)  = s
\]
and thus the coordinates $s$ of the orthogonal space in terms of the basis $A_S$ correspond directly to shear coordinates. This gives a nice geometric interpretation for $\mathcal{S}$; however, in practice we prefer to use $C$ over $A_s$ as it is sparse and better conditioned.  Coefficients $x$ in the basis $C$, while they span the same space $\mathcal{S}$ as the independent shear basis $A_S$, are less intuitive, but can serve equally 
well as the coordinates on the manifold $F(\lambda) = 0$. We then take $S = [C \; b]$, where $b$ is the column vector corresponding to conformal scaling at the fixed vertex $v_0$ to get a complete set of $|E| - |V| + 1$ basis vectors.

\section{Distortion measures} 
\label{sec:objectives}

\subsection{Penner coordinate measures of distortion.} 
\paragraph{Log-space metric distortion.}  The simplest measure of distortion is per-edge differences in Penner coordinates 
(i.e., log-lengths in fixed connectivity); the global objective corresponding to this is
\begin{equation}
E^L = \sum_e (\lambda_e - \lambda^0_e)^2    
\end{equation}
Note that because $(\lambda_e - \lambda^0_e) = \ln \frac{\ell_e}{\ell_e^0}$,  this measure has a natural interpretation as 
one-dimensional Hencky strain energy summed over all edges. 

This distortion measure is \emph{quadratic}, which is a particularly easy case for optimization algorithms, even with nonlinear constraints. Minimizing this measure can be interpreted as finding the closest point on the constraint manifold to the original  metric. 
Below, we discuss how more common metric distortion objectives can be expressed in Penner coordinates.  

\paragraph{Area distortion.}  While metric distortion measures are most useful,
different tradeoffs between shape and area distortion may be desirable. As we explain below, 
metric interpolation, without violating angle constraints, is natural in our setting. 
In Penner coordinates, we can define a quadratic measure of area distortion, based on conformal map scale factors.

\begin{definition}
\emph{Best-fit scale factors} $u(\lambda)$ for  deformed  metric with Penner coordinates $\lambda$  relative to an initial metric $\lambda^0$,  are defined as
\[
u(\lambda) = \mathrm{argmin}_u \| (\lambda  - \lambda^0) - Bu \|^2
\]
\end{definition}
For conformally equivalent metrics, $u(\lambda)$ is the conformal scale factor relating them, which achieves the minimum value of $0$. 

The squared norm $ E^{A}(\lambda) =  \|u(\lambda)\|_2^2$ can be used as a quadratic measure of  area distortion, to which we refer as \emph{log-scale distortion.}. Likewise, the residual $\| (\lambda  - \lambda^0) - Bu \|^2$ itself is a quadratic measure of conformality, which we refer to as the \emph{log-scale residual}.

While many standard measures \emph{cannot} be naturally extended to the whole space of metrics, for fixed connectivity these can also be easily expressed in Penner coordinates, and some are even convex.  While we do not use them in our experiments, we discuss the expressions for standard ones for completeness. 

\subsection{Other distortion measures}
Pointwise measures of metric change that can be expressed as invariants of the  pointwise  metric tensor $Q$, see, e.g., \cite{Rabinovich:2017:SLI}. 

Typically, these present a trade-off between \emph{area distortion} and \emph{shape distortion}; 

These can be expressed in terms of invariants of the metric tensor $Q$, 
\[\begin{split}
J_1 &= \tr Q = \sigma_1^2 + \sigma_2^2\\
J_2 &= \det Q = \sigma_1^2\sigma_2^2
\end{split}
\]
where $\sigma_i$ are singular values of the deformation. The area distortion can be measured by how close $J_2$ is to 1 and shape distortion can be characterized, e.g., by how close $J^1/(2J_2) = \frac{1}{2}(\frac{\sigma_1}{\sigma_2} + \frac{\sigma_2}{\sigma_1})$ is to 1. In the continuum limit, conformal maps have no shape distortion.  

One natural tradeoff is given by the symmetric Dirichlet energy, that 
has the form 
\[
E^{SD} = J_1\left(1+\frac{1}{J_2}\right)
\]
it, however, goes to infinity if a triangle is degenerate (i.e., triangle 
inequality becomes equality).  The paradox is that while this is a desirable behaviour for fixed connectivity mesh optimization, this behavior is also constraining, as it may prevent the  optimization from finding a feasible solution for a set of constraints, if it requires a connectivity change.   For the same reason, it is not suitable for generalization to Penner  coordinates, as it cannot be smoothly extended to the whole space of metrics.   We show next how such generalization can be obtained. 

For a pair of  triangle meshes with the same connectivity and 
metrics $\ell^0$ and $\ell$, both invariants $J^1$ and $J^2$ can be 
expressed in terms of intrinsic per-triangle quantities
\[
J_1 = \sum_{i=1,2,3} \frac{\cot{\alpha^0_i}}{2A_0} \ell_i^2\quad
J_2 = \frac{A^2}{{A_0^2}}
\]
where $\ell_i$ are edge lengths of a triangle, $\alpha_i$ are opposite angles, 
and $A$ is the area; these quantities, in turn, have simplex expressions in terms of metric lengths only. Moreover, as shown in \cite{Chien:2016:BDP}, Symmetric Dirichlet energy, expressed in terms of (squared) lengths is convex. 

Assuming that the initial metric 
$\ell^0$ satisfies triangle inequality, both  $J_1$ and $J_2$ can be extended to arbitrary Penner coordinates for the deformed metric $\ell$:
$J_1$ is already quadratic in $\ell_i$, and $J_2$ can be expressed using Hero's 
formula as a 4th order polynomial in $\ell_i$: both are defined for 
arbitrary values of $\ell$.  However, this is not the case for energies containing 
terms like $1/J_2$, such as Symmetric Dirichlet energy. 

\paragraph{Quadratic metric distortion energy.}
As symmetric Dirichlet energy goes to infinity as triangle inequality approaches equality,  we consider the quadratic approximation to the symmetric Dirichlet 
energy in terms of logarithmic coordinate differences  $\lambda-\lambda^0$:
if $\delta = [\lambda_1-\lambda_1^0,\lambda_2-\lambda_2^0,\lambda_3-\lambda_3^0]$,
then 
\[
E^{SDQ} = \delta^T A^{SDQ}(\ell^0_1,\ell^0_2,\ell^0_3) \delta
\]
where the matrix $A^{SDQ}$ is defined as follows in terms of edge lengths (with the standard cosine law and Hero's formula
used for the cosines and the area:
\[
\begin{split}
A^{SDQ}(\ell_1,\ell_2,\ell_3) = 
\frac{\ell_1^2\ell_2^2\ell_3^2}{4\,A^4} 
&\left[ \begin {array}{ccc} 
 { {{{\ell_1}}^{2}   {\cos\alpha_1}^{2} }}
 &-{{{{\ell_1}}{{\ell_2}} \cos\alpha_3 }}
 &-{{{{\ell_3}}{{\ell_1}} \cos\alpha_2 }}
\\ 
\noalign{\medskip}
-{ {{{\ell_1}}{{\ell_2}}  \cos\alpha_3 }}
&{ {{{\ell_2}}^{2} {\cos\alpha_2}^{2} }}
&{-{{{\ell_2}}{{\ell_3}} \cos\alpha_1 }}
\\ 
\noalign{\medskip}
 -{ {{{\ell_3}}{{\ell_1}}  \cos\alpha_2 }}
&-{ {{{\ell_2}}{{\ell_3}}  \cos\alpha_1 }}
&{ {{{\ell_3}}^{2} {\cos\alpha_3}^{2} }}
\end {array} \right] 
+\\
\frac{1}{2A^2} 
&\left[ \begin {array}{ccc} 
 {{{{\ell_1}}^{4} }}
 &{ {{{\ell_1}}^{2}{{\ell_2}}^{2} }}
 &{ {{{\ell_3}}^{2}{{\ell_1}}^{2}}}
\\ 
\noalign{\medskip}
{ {{{\ell_1}}^{2}{{\ell_2}}^{2}  }}
&{{{{\ell_2}}^{4}  }}
&{ {{{\ell_2}}^{2}{{\ell_3}}^{2}  }}
\\ 
\noalign{\medskip}
 { {{{\ell_3}}^{2}{{\ell_1}}^{2}  }}
&{ {{{\ell_2}}^{2}{{\ell_3}}^{2} }}
&{ {{{\ell_3}}^{4} }}
\end {array} \right] 
\end{split}
\]
By construction, this energy has the property that it coincides with symmetric Dirichlet for small metric deformations. 
As it is expressed in intrinsic variables and has the permuational symmetry in edge indices, it is geometrically invariant and isotropic.
Finally, it is quadratic, hence convex,  and defined on the whole space of metrics.
This is the main energy we use for our examples. 

\section{Computing gradients}
\label{sec:gradients}
The last element we need is computation of the gradients of the distortion measures  described in the previous section, to be used in gradient-based optimization algorithms. 

 Using the parametrization of the constraint manifold, we write the objective as $E_S(x) = E(\lambda_S(x))$, where $E$ is any Penner coordinate distortion energy from Section~\ref{sec:objectives}. By the chain rule, the derivative of the energy is reduced to the derivative of $u_S(x)$:
\[
\nabla_{x} E_S = \nabla_{\lambda} E \cdot \nabla_{x} \lambda_S = \nabla_{\lambda} E \left( S + B\nabla_{x} u_S \right)
\]
Using the defining implicit equation $F(S x + u_S(x)) = 0$, we reduce computing this derivative to an equation
\[
0 = \nabla_{x} \left[ F(S x + u_S(x)) \right] = \left( \nabla_{\lambda} F \right) \cdot \left( S + B\cdot \nabla_{x} u_S \right)
\]
 $\nabla_{\lambda} F B$ is known to be invertible (if one scale factor $u$ is eliminated), as this is known to be the standard cotangent matrix.  From this we obtain
\[
\nabla_{x} u_S = -(\nabla_\lambda F \cdot B)^{-1} \nabla_\lambda F \cdot S
\]
Putting the two equations together, we have that
\begin{mdframed}
\[
\nabla_{x} E_S = \nabla_{\lambda} E \left( S - B (\nabla_\lambda F \cdot B)^{-1} \nabla_{\lambda} F \cdot S\right)
\]    
\end{mdframed}
reducing the energy derivative computation to derivatives of $E$ with respect to $\lambda$, which are straightforward to obtain in all cases, and $\nabla_\lambda F$, the derivative of the angle constraints. 

\paragraph{Computing $\nabla_\lambda F$.}
From \eqref{eq:constrained_manifold},
\[
\DF(\lambda) = S\nabla_{\tl} \alpha\, \nabla_\lambda\,Del(\lambda)
\]
where $\tl$ are the log length coordinates for the Delaunay connectivity.
The calculation of derivatives of $\alpha$ is standard; for completeness, we include the formulas in the appendix.  The main part of the computation is the Delaunay derivative  $\nabla_\lambda \del(\lambda)$. 
Define $\tl^i = \tau^i(\tl^{i-1})$, $\tl^0 = \lambda$.
 
Conceptually, the matrix $\nabla_\ell \del$ can be computed as a  product of differential matrices $D^i = \nabla_{\lambda} \tau^i$ for each flip. Interestingly, the entries  of this matrix can be expressed entirely in terms of \emph{shear} corresponding to the flipped halfedge. If $\tau$ is the map corresponding to flipping edge $e$, with for incident edges $a,b,c,d$, defined as in Figure~\ref{fig:intrinsic}, define  the standard shear
\[
t = \frac{\ell_a \ell_c}{\ell_b\ell_d} = e^{\shear_e}
\]
where $\shear_e$ is the shear coordinate introduced in Section~\ref{sec:projection}.
Shear is a discrete conformal invariant.
The matrix of  derivatives of $\tau(e)$ with respect to $\lambda$, 
is an identity matrix, except the rows corresponding to $e$, 
which is zero except the subrow corresponding to 
edges $e,a,b,c,d$, which has the form:
\[
D_{e,[e,a,b,c,d]} = 
\left[-2,
\frac{2t}{1+t}, \frac{2}{1+t}, 
\frac{2t}{1+t}, \frac{2}{1+t}, 
\right]
\]
As each matrix $D^i$ has a single row different from the identity matrix  and only five entries are non-zero, the computation of the matrix product 
$D A$ needed for each step of building $\nabla_\lambda \del$ 
is very low cost, so for even for a large number of flips assembly can be done efficiently.

\section{Optimization}
\label{sec:projected}
\label{sec:unconstrained}
\paragraph{Optimization in shear coordinates}
The most direct approach is to apply a gradient-based optimization method 
(in the simplest case gradient descent, but more efficiently, conjugate gradient or L-BFGS).  In shear coordinates, the problem is conceptually an unconstrained optimization problem  of minimizing $E(\lambda_S(x))$ with respect to $x$, 
with $E$ differentiable with respect to $x$, as demonstrated in Section~\ref{sec:gradients} by explicitly computing its gradient. 

As a consequence, standard convergence analysis for gradient-based methods applies. 
E.g., for any starting point, the method is guaranteed to be arbitrary close to an extremal point where $\nabla_x E = 0$. It is also likely, although we do not verify this formally, that for almost all starting points, this point is a local minimum \cite{nemirovskioptimization1999}.  While convergence to saddle points is possible we have not observed it in practice. 

An example of the energy landscape with respect to shear coordinates is shown in \ref{fig:landscape}.  This is the landscape for a tetrahedron for which $S$ has dimension 2.   We found that energy smoothness strongly depends on the assignment of target angles and connectivity, with considerable impact on the range of convergence. 

\begin{figure}
    \includegraphics[width=\columnwidth]{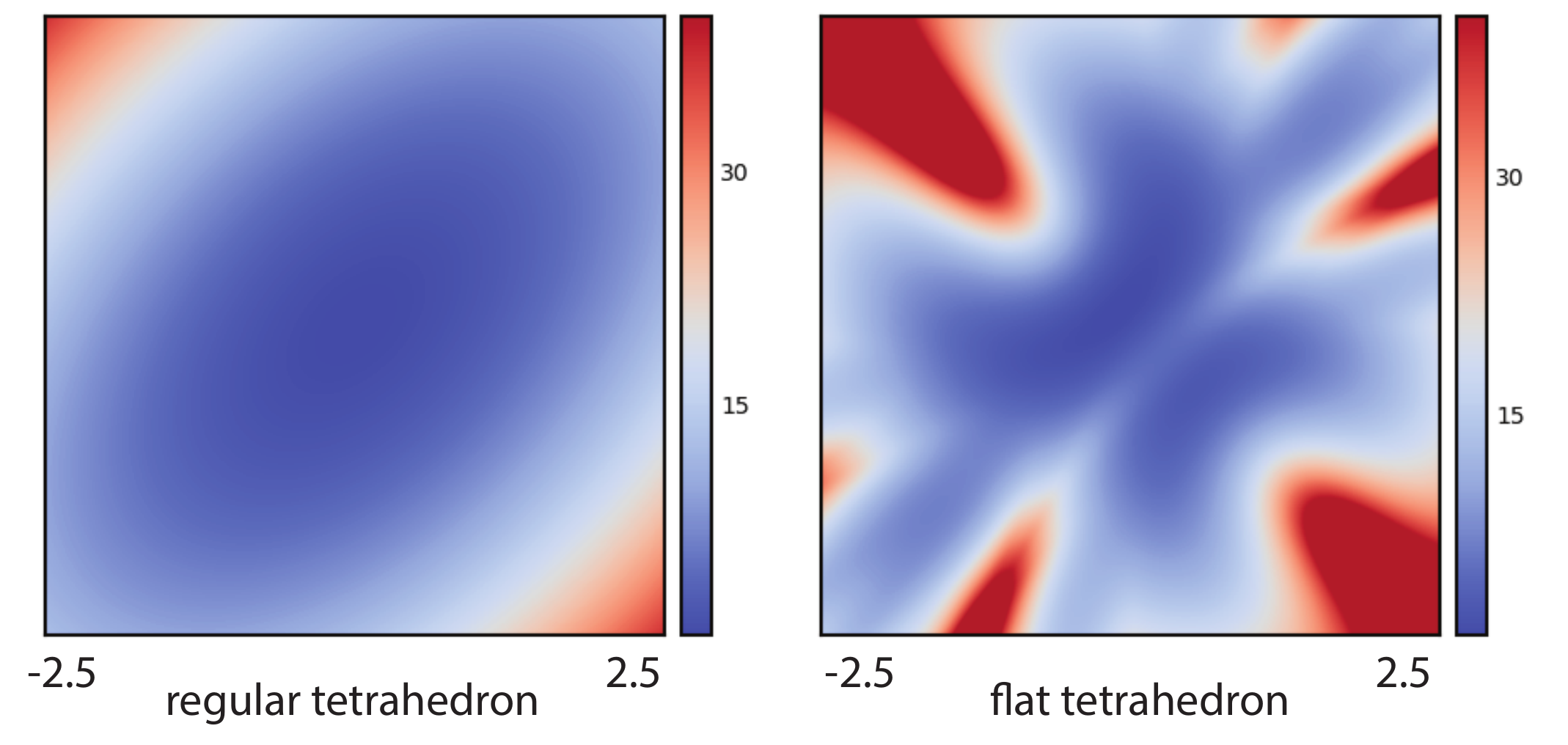}
    \caption{Log length energy landscapes for shear optimization of a tetrahedron with different angle constraints: regular tetrahedron has all vertex angles equal, and "flat" tetrahedron has $2\pi$ total angle at one vertex and equal $2\pi/3$ at the remaining ones.  Note that the second landscape is considerably less smooth.
    }
    \label{fig:landscape}
\end{figure}

\paragraph{Coordinate-projected gradient-based optimization}
While the unconstrained optimization in Section~\ref{sec:unconstrained} is appealing in its conceptual simplicity and theoretical guarantees, we found that the convergence of most methods is slow, and can be substantially accelerated by working directly in Penner coordinates and using coordinate projection to return the variables to the constraint manifold at each step. 
In this case we formulate the problem as 
\[
\min_\lambda E(\lambda),\;\mbox{subject to}\; F(\lambda)=0 
\]

The  step $\Dl$ in Penner coordinates at iteration $k$, before projection to the constraint, is computed as follows 
\[
\Dl^k = -\beta^k \left(\nabla E(\lambda^k) + \DF_k^T \mu^k\right) 
\]
where $\beta^k$ is the step size, 
$\DF_k$ is the (full rank) Jacobian of the map $F$, i.e., 
the matrix of size $(|V|-1) \times |E|$ given by 
$\DF_k = \DF(\lambda^k)$.
The vector $\mu^k$ of size $|V|-1$ is determined by the condition
\[
F_k + \DF_k \Dl^k = 0
\]
which is a linearization of the 
constraint $F(\lambda^k + \Dl^k) =0$.  More explicitly, $\mu^k$ solves the equation
\[
\DF_k \DF_k^T \mu^k = -F_k -\DF_k \nabla E (\lambda^k)
\]

The step size $\beta_k$ is determined by 
a one-dimensional line search, e.g., standard Wolfe-Armijo search; additionally, we bound the deviation from the constraint $F(\lambda) = 0$ by $\epsilon$. The update is followed by the exact projection to the constraint $\cP(\lambda)$ by solving the conformal 
map equation
\[
F(\lambda^{k+1} + Bu) = 0
\]
 for $u$ as in \cite{campen2021efficient}  while keeping $\lambda^{k+1}$ fixed,  where $u \in \bR^{|V|}$ is the log scale factors at vertices, and  $B$ is defined in Section~\ref{sec:projection}. 
 This is followed by the update $\lambda^{k+1} := \lambda^{k+1} + Bu$.
As discussed in Section~\ref{sec:projection}, this equation always has a solution. A feature of this 
solution is that $u$ do not depend on the choice of triangulation, so need not be transformed when remapped to 
Penner cell $\cP(M_0)$.

\begin{algorithm}[b]
\setstretch{0.9}
\SetAlgoLined
\DontPrintSemicolon
\SetKwInOut{Input}{Input}
\SetKwInOut{Output}{Output}
\SetKwProg{Fn}{Function}{:}{}
\SetKwRepeat{Do}{do}{while}
\SetKw{Not}{not}
\Input{
    triangle mesh $M = (V,E,F)$, closed, manifold,\newline
    edge lengths $\ell = e^{\lambda/2} > 0$ satisfying triangle inequality,\newline
    target angles $\hat\Theta > 0$ respecting Gauss-Bonnet
}
    \vspace{2pt}
\Output{
    triangle mesh $\tM = (V,\tilde{E},\tilde{F})$,\newline
    edge lengths $e^{\tl / 2}$ satisfying triangle inequality,\newline 
    with vertex angles $\max_\Theta \|\Theta - \hat\Theta\| \leq \epsilon_c$  and $\|(I- \DF  \DF F^T) \nabla E\| \leq \epsilon_m$.
    }
    \vspace{2pt}
\Fn{\scp{PennerOptimize}$(M,\lambda,\hTh)$}{
\While{ \Not \scp{Converged}$(M,\lambda)$}{
    $\tM, \tl ,D  \gets$  \scp{DiffMakeDelaunay}$(M,\lambda)$\;
    $\alpha,  \nabla_{\tl}\alpha \gets$ \scp{ComputeAnglesAndGradient}$(\tM,\tl)$ \;
     $\DF \gets \asum \, \nabla_{\tl}\alpha\, D$\;
     $L \gets  \DF  \DF^T$\;
     Solve $L \mu = \DF \nabla E(\lambda) - F(\lambda)$\;  \tcp*{Lagrange multiplier}
     $\beta \gets$ \scp{LineSearch}$(\nabla E(\lambda) + \DF^T \mu )$\; tcp*{Step size}
    $\Dl \gets -\beta \left( \nabla E(\lambda) + \DF^T \mu\right)$\; 
     $\lambda \gets \lambda + \Dl$.
     $\lambda \gets \lambda  + Bu$.
    }
    \Return $\lambda$ 
}
\vspace{1ex}
\Fn{\scp{DiffMakeDelaunay}$(M, \lambda)$}{
	$D \gets \mathrm{Id}$ \;
	$Q  \gets \{e | \mbox{\scp{NonDelaunay}}(M,\lambda, e)\}$ \; 
	\While{$Q  \neq \emptyset$}{
		remove $e$ from $Q$\;
		$\tM,\tl \gets$ \scp{PtolemyFlip}$(\tM,\tl,e)$ \;
		$D \gets$ \scp{DiffPtolemy}$(M,\tl, e) \cdot D$ 
       }
\Return $\tM,\tl,D$
}
\caption{Coordinate-projected  gradient descent algorithm summary.}
\end{algorithm}

Note that mesh connectivity $M$ does not change with iterations: at every step,  flips are done temporarily to be able to compute $\DF$, and then mesh with modified connectivity is discarded.

We can also extend the projected gradient descent to a form of projected Newton descent by solving instead for
\[
\nabla^2 E(\lambda^k) \Dl^k = -\beta^k \left(\nabla E(\lambda^k) + \DF_k^T \mu^k\right) 
\]
where $\nabla^2 E(\lambda^k)$ is the Hessian of the Penner coordinate distortion energy. With a full line step of $\beta^k = 1$, this descent direction optimizes the quadratic approximation of the energy in the tangent space to the constraint manifold. As our quadratic approximation to the symmetric Dirichlet energy has a constant sparse Hessian, $\Dl^k$ can be computed efficiently for our this energy by solving a standard quadratic programming problem. Note that this method, due to linearization of the constraints, does not have the same guarantees of quadratic convergence as unconstrained Newton descent, but it was observed to significantly improve convergence in practice for the quadratic energy.

\paragraph{Connection to shear-coordinate optimization.}
Suppose we have Penner coordinates $\lambda = S x + B u$. Any descent direction $\Delta \lambda \in \bR^{|E|}$ in Penner coordinate space can also be decomposed as $\Delta \lambda = S \Delta x + B \Delta u$. Thus, the projected gradient descent step $\lambda + \Delta \lambda + Bu'$, where $u'$ are the scale factors for the conformal projection to the constraint, decomposes as follows:
\[
\begin{aligned}
\lambda + \Delta \lambda + Bu'
& = S x + B u + S \Delta x + B \Delta u + B u' \\
& = S (x + \Delta x) + B(u + \Delta u + u')
\end{aligned}
\]
Since $u + \Delta u + u'$ must be the unique scale factors $u_S(x + \Delta x)$, we may therefore consider projected descent as just unconstrained descent in the direction $\Delta x$.

\section{Continuous Maps from Penner Coordinates}
Our algorithms produce Penner coordinates $\ell$ for the initial connectivity $M_0$ that determine a final connectivity $M$ with edge lengths $\ell'$ via $(M,\ell') = \text{Del}(M_0,\ell)$ defining a Euclidean metric on $M$.
One of the strengths of our method is it postpones the need to merge the final and updated connectivity explicitly, working in a simple optimization space with variables associated with the edges of the fixed triangulation $M_0$.

However, many applications require maps between the initial
\begin{wrapfigure}{l}{0.25\linewidth}
\begin{overpic}[width=1.3\linewidth]{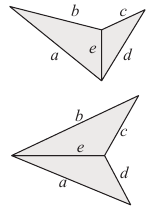}
\end{overpic}
\end{wrapfigure}
surface $PL(M_0, \ell^0)$ and the final surface $PL(M,\ell')$ associated with these cone metrics.
If the cone metric $(M_0, \ell)$ also lies in the Penner coordinate cell $\cP(M_0)$, then piecewise affine maps for the triangles of $M$ can be inferred from the edge lengths determined by $\ell^0$ and $\ell$.  In general, however, the triangulations $M_0$  and $M$ will differ; moreover, it will also generally not be possible to intrinsically flip $(M_0, \ell^0)$ to the final connectivity or vice versa while preserving Euclidean lengths.
The inset shows a simple example where the connectivities differ and the edge $e$ cannot be flipped in either mesh.
 
For the specific case of conformally equivalent metrics, \cite{campen2021efficient} describes an scheme for defining a continuous map from the initial mesh to the final that is well-defined even if $M_0 \neq M$, using circumcircle-preserving projective maps. 
Unfortunately, this method does not extend directly to the general setting; even for a fixed triangulation, circumcircle-preserving projective maps will not be continuous across edges of adjacent triangles if two metrics are not conformally equivalent  \cite{springborn2008conformal}.

\paragraph{Summary of the approach.}  Our overall approach is to reduce the problem for PL meshes to the problem of mappings between triangulated surfaces equipped with \emph{decorated ideal hyperbolic} metric that correspond to the Penner coordinates $(M,\ell)$ of the metrics we consider,  as we describe in more detail below. 
The distinctive feature of these hyperbolic metrics 
is that they are well-defined for any choice of Penner coordinates, whether this choice defines a Euclidean metric or not. We denote the associated p.w. hyperbolic surface by $H(M, \ell)$.

It is known how to map between the Euclidean surface $PL(M, \ell')$ and the decorated ideal hyperbolic surface $H(M, \ell')$ when $\ell'$ is Delaunay with respect to $M$ \cite{campen2021efficient, gillespie2021discrete}, and we decompose the mapping problem between $H(M_0, \ell^0)$ and $H(M, \ell')$ into two steps. 
\begin{enumerate}
\item We keep the \emph{hyperbolic metric} on $H(M,\ell')$ fixed,  and change the connectivity to $M_0$, obtaining an ideal hyperbolic surface $H(M_0,\ell)$  by retriangulating $H(M,\ell')$; 
\item We keep the  \emph{connectivity} fixed and map $H(M_0,\ell^0)$ to $H(M_0,\ell)$.
\end{enumerate}

For the first step, we observe that \cite{campen2021efficient}, relies only on the hyperbolic metric on $M_0$ and $M$, which we discuss below,  so it can be applied to construct the map  $H(M_0,\ell)$ to  $H(M,\ell')$  without any changes, despite the fact that $(M_0, \ell)$ does not necessarily define a Euclidean metric.

Briefly, this algorithm constructs an overlay mesh $M^r$ which is a refinement of both $M_0$ and $M$, with edges of $M_0$ and $M$ embedded in $M^r$ as sequences of edges, with positions of newly inserted vertices defined by the shared hyperbolic metric on $M_0$ and $M$ (hyperbolic metrics implied by $\ell$ and $\ell'$ are identical in the case of discretely conformally equivalent $(M_0,\ell)$ and 
$(M,\ell')$).  The final map is defined by a continuous p.w.projective  map on triangles of $M^r$. We refer to \cite{campen2021efficient,gillespie2021discrete} for the details of the algorithm. 

For the second step, we show how one can define a continuous p.w. projective map between two \emph{different} ideal hyperbolic metrics defined on the same triangle mesh $M$, with Penner coordinates $\ell^0$ and $\ell$.

\subsection{Ideal hyperbolic metrics on triangulations}
Suppose we are given a mesh $M$ with Penner coordinates $\ell$ assigned to its edges.

To define the metric on each triangle of a mesh $M$, we make use of the Beltrami-Klein hyperbolic plane model, which represents the hyperbolic plane using a unit disk. In this model, chords of the disk are straight lines, and the isometries of the model are precisely the projective maps of the Euclidean plane that preserve the unit disk.  We identify each triangle of the mesh $T$ to a triangle inscribed in a unit disk of the Beltrami-Klein model. In the hyperbolic metric, 
this triangle is an \emph{ideal triangle}, as its vertices are on the boundary of the circle, i..e,  correspond to points at infinity in hyperbolic metric (ideal points).  The triangle sides are infinite straight lines.  All ideal triangles are congruent:   there is a (unique) circumcircle-preserving projective map that carries a given triangle onto any other, and these maps are isometries in the hyperbolic metric. 
Note that so far, we did not make use of Penner coordinates $\ell$.

To construct a complete metric on $M$, we need to identify the sides of the ideal triangles corresponding to the same edge.  Because the lengths of the sides are infinite, there is a one-parametric family of isometries between these sides: the  isometries differ by a translation (\emph{shear}). 

Formally, for an edge $e_{ij}$ in a pair of adjacent triangles with adjacent edges with logarithmic Penner coordinates $\lambda_a, \lambda_b, \lambda_c, \lambda_d$, the corresponding shear coordinate $\shear_e$ is given by 
\begin{equation}
\shear_e = \frac{1}{2}\left(\lambda_a - \lambda_b + \lambda_c - \lambda_d\right)
\label{eq:shear}
\end{equation}
Note that these are exactly the same shears that are used in Section~\ref{sec:projection}.

Two choices of logarithmic Penner coordinates $\lambda, \lambda'$ describe the same hyperbolic metric if and only if the corresponding shear coordinates $\sigma, \sigma'$ are the same. Conversely, for any $\sigma \in \bR^{|E|}$ such that the sum of values around any given vertex is zero, there is some hyperbolic metric with shear coordinates $\sigma$.   Moreover, if shears are the same, and  Penner coordinates $\ell_{ij}  = e^{\lambda_{ij} / 2}$ satisfy triangle inequalities, then the corresponding \emph{Euclidean} metrics are conformally equivalent. 

The  formula \eqref{eq:shear} is not arbitrary: $\lambda_{ij}$ have a natural interpretation in terms of lengths of segments cut out from the infinite edges by \emph{horocycles} (curves specific to hyperbolic geometry, which are limits of circles with a common tangent at a point as the radius goes to infinity). 
This connection is explained in detail in \cite{springborn2019ideal} and \cite{gillespie2021discrete}.  

To define the the metric on the whole surface precisely, we define charts on pairs of adjacent triangles 
$T_{ijk}$ and $T_{jil}$ sharing edge $e_{ij}$.  Following \cite{campen2021efficient}, each chart is a pair of congruent right-angle triangles  $(T_1, T_2)$ with one side of each, corresponding to the edge $e_{ij}$ spanning the horizontal diameter of a unit circle, forming a rectangle (Figure~\ref{fig:chart}). 
The positions of the vertices $p^r = (r, \sqrt{1-r^2})$ and $-p^r$ at the right angles  of the triangles $T_1$ and $T_2$ are chosen so that the hyperbolic distance in the Beltrami-Klein model between their projections on the horizontal diameter is the shear $\shear$. Explicitly, they are given by $r = (1 - c)/(1 + c)$, where $c = e^{\shear}$, and the shear coordinate for the common edge in the two triangle chart is also $\shear$.

\begin{figure}
    \centering
    \includegraphics{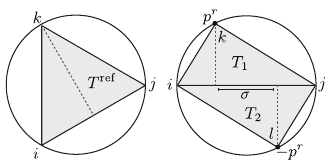}
    \caption{Two-triangle chart for the ideal hyperbolic metric construction.}
    \label{fig:chart}
\end{figure}

If two distinct charts overlap, the overlap is always a single triangle; suppose this triangle corresponds to $T_1$ in the first chart and $T_1'$ in the second chart.  We define the transition maps between two overlapping charts to be the unique circumcenter-preserving map between $T_1$ and $T_1'$.  The only cycle condition that needs to hold for two-triangle charts is that the composition is identity for a cycle of charts around a vertex, which is ensured by the fact that shears add up to zero around a vertex.  As the transition maps are isometries on the overlaps this set of charts defines a hyperbolic metric on $M$.

\subsection{Maps between hyperbolic surfaces}

\paragraph{Projective maps between surfaces}
Let $H$ be the set of oriented halfedges of $M$, i.e., sides of triangles of $M$, with two sides associated with each edge for closed meshes.  Let $\tau \in \bR^{|H|}$ be such that for any triangle $T_{ijk}$ these values satisfy 
\[
\tau_{ij} + \tau_{jk} + \tau_{ki} = 0
\]
and for any edge $\{ij, ji\}$ they satisfy
\[
\tau_{ij} + \tau_{ji} = \shear_{ij}' - \shear_{ij}
\]
These constraints can be expressed by 
\[
\begin{aligned}
Z\tau & = 0 \\
C\tau & = \shear' - \shear
\end{aligned}
\]
where $Z$ is a $|F| \times |H|$ matrix, $C$ is a $|E| \times |H|$ matrix, and $\shear$ and $\shear'$ are vectors of per edge shears for $H(M, \ell)$ and $H(M, \ell')$ respectively. We then have the following proposition:

\begin{prop}
Suppose two hyperbolic surfaces $H(M, \ell)$ and $H(M, \ell')$ are defined with respect to the same connectivity, and suppose $\tau \in \bR^{|H|}$ satisfies the linear constraints
\[
\begin{aligned}
Z\tau & = 0 \\
C\tau & = \shear' - \shear
\end{aligned}
\]
We equip each triangle $T_{ijk}$ in $M$ with an equilateral reference triangle $T^{\text{ref}}$, and we define the map $P^{\tau}_{ijk}: T^{\text{ref}} \to T^{\text{ref}}$ in terms of (unnormalized) barycentric coordinates on the equilateral reference triangle by
\[
P^{\tau}_{ijk}(w_i, w_j, w_k) = 
\begin{bmatrix}
e^{2(\tau_{ki} - \tau_{ij})/3}w_i \\
e^{2(\tau_{ij} - \tau_{jk})/3}w_j \\
e^{2(\tau_{jk} - \tau_{ki})/3}w_k
\end{bmatrix}
\]
We then define the map $P^{\tau}: H(M, \lambda) \to H(M, \lambda')$ per triangle by
\[
P^{\tau}|_{T_{ijk}} = P^{\tau}_{ijk}
\]
$P^{\tau}$ is well-defined and continuous with respect to ideal hyperbolic surfaces  $H(M, \ell)$ and $H(M, \ell')$
\end{prop}

A proof of this proposition is provided in the appendix.

\paragraph{Minimal translation map}
Note that, unlike in the affine and conformal case, the continuous projective map is not generally unique. There are $3|F|$ degrees of freedom and only $|E| + |F|$ constraints. We choose to use the projective maps that minimize $\frac{1}{2}\|\tau\|_2^2$ subject to the constraints. This has a geometric interpretation as minimizing the sum of squares of hyperbolic translations along the edges of the triangulation (see appendix). In the case where $\shear = \shear'$, i.e., the initial and final metrics are conformally equivalent, the unique solution $\tau = 0$ to the minimization will then describe an isometry and thus reduce to \cite{campen2021efficient}. Since $\frac{1}{2}\|\tau\|_2^2$ is a quadratic energy and our constraints are linear, the solution is easily found. We also note that this construction closely resembles the halfedge forms defined in \cite{custers2020}.

\paragraph{Refinement} Our method produces a map $f:PL(M_0, \ell^0) \to PL(M,\ell')$ that is continuous and p.w. projective on the common overlay mesh $M^r$ generated in the first step. This overlay can be used directly to generate refinements $PL(M_0, \ell^0)^r$ and $PL(M,\ell')^r$ of the PL meshes so that their triangular faces are in one-to-one correspondence and $f$ is a projective bijection between corresponding faces. However, the resulting refinements can be unnecessarily fine, often producing an increase of up to $200\%$ of the original face count. We adapt the method of \cite{weber2014locally} to instead generate a coarse common triangulation of the two cone metrics that only performs local refinement where it is necessary. In brief, we use $f$ and a planar embedding of $PL(M, \ell')$ to map each face $\triangle$ of $(M_0, \ell^0)$ to a triangle $f(\triangle)$ in $(M, \ell') \subset \mathbb{R}^2$. If all triangles are mapped to the plane with positive orientation, then this procedure generates a compatible retriangulation of $PL(M, \ell')$ with no refinement, and we use the natural p.w. linear map for this common triangulation. Otherwise, the inverted faces are refined in $PL(M_0, \ell^0)$, inserting vertices in the same positions as in the overlay mesh, and the new subfaces are again mapped to the plane by $f$. We proceed iteratively until no faces are inverted. This refinement will in the worst case recover the original overlay mesh and is thus guaranteed to terminate with a valid triangulation.

\section{Experiments}\label{sec:results}

In this section we demonstrate the performance of the method on the dataset from \cite{Myles:2014}, 
in several forms, similar to \cite{campen2021efficient} metric optimization on closed meshes,  cut meshes with parametrization alignment to cuts and meshes with boundary. 

We compare distortion distributions using  best-fit conformal factors, to measure area distortion, and symmetric
ratios of 3D to parametric edge length (edge stretch factors) to measure metric distortion, in a way that is independent and distinct from  the metric distortion objectives we use. 
  
Our intrinsic approach, as discussed in Section~\ref{sec:related} is complimentary in terms of problem formulations to the approaches formulated in terms of $(u,v)$ variables.  The only type of existing methods that is closely aligned with our setting are conformal maps. 

For example, a symmetric Dirichlet  parametrization of a topological disk naturally supports free boundary conditions or fixed boundary.  Natural conditions for our method are prescribed boundary angles, or isometric mapping on boundary edges as in \cite{springborn2008conformal}.  
Nevertheless, to provide a quantitative comparison between optimizing linearized symmetric Dirichlet energy $E^{SDQ}$ in Penner coordinates and standard symmetric Dirichlet energy, we compare the distribution of 
stretch factors in similar, but not identical scenarios.

Unless otherwise noted, the textures and shadings of the surface are the same as in Figure \ref{fig:teaser}.

\paragraph{Metric interpolation.}
One useful feature of Penner coordinates is that one can easily interpolate
between any two metrics by interpolating their coordinates, as there is no need to avoid triangle inequality violations.  For cone metrics with prescribed angles, the interpolated metric need not have these angles, but an additional conformal projection can be performed on the interpolated coordinates. 
For example, we can achieve any desired tradeoff by computing a metric with Penner coordinates $\lambda^A$, by optimizing the scale distortion measure $E^A$, and pure conformal map,  which minimizes shape distortion, with Penner coordinates $\lambda^C$, 
and a metric $\lambda^I$ minimizing isometry measure $E^{SDL}$;
then we can compute any intermediate metric with prescribed angles $\hat\Theta$ as 
\[
\lambda^w = \cP(w_1\lambda^A + w_2\lambda^C +w_3\lambda^I)
\]
where $w_1 + w_2 + w_3 = 1$, or use another type of interpolation between these two metrics.  Figures  \ref{fig:teaser} and  \ref{fig:interpolation} show examples of such interpolation.

\paragraph{Optimization objectives.}  We consider three optimization objectives described Section~\ref{sec:objectives}.  Figure~\ref{fig:objectives} shows the results of optimizing a number of these on two models. 
For each model, we include histograms showing the distribution of different distortion measures: 

The difference between conformal and other measures
is obvious in much higher scale distortion. 
The difference between $L_2$ and $L_p$ ($p=4$) 
version of the log-length energy is more subtle: 
one can observe that $L_p$ penalizes more local 
distortion but slightly increases the average, as expected.  Finally, log-scale distortion has
the lowest distortion of scale at the expense of 
more anisotropic stretching. A more severely distorted
parametrization is shown in Figure~\ref{fig:interpolation}, with interpolation 
between conformal, log-length and log-scale,  showing some of these effects more clearly.
E.g., note how for the conformal map scale residuals are all at zero, but there is a broad distribution of scale factors. 
At the opposite extreme, for log-scale measure, the scale factors are all at zero, i.e., there is virtually no area distortion, but there is broad range of scale residuals. 

\begin{figure}
    \includegraphics[width=\columnwidth]{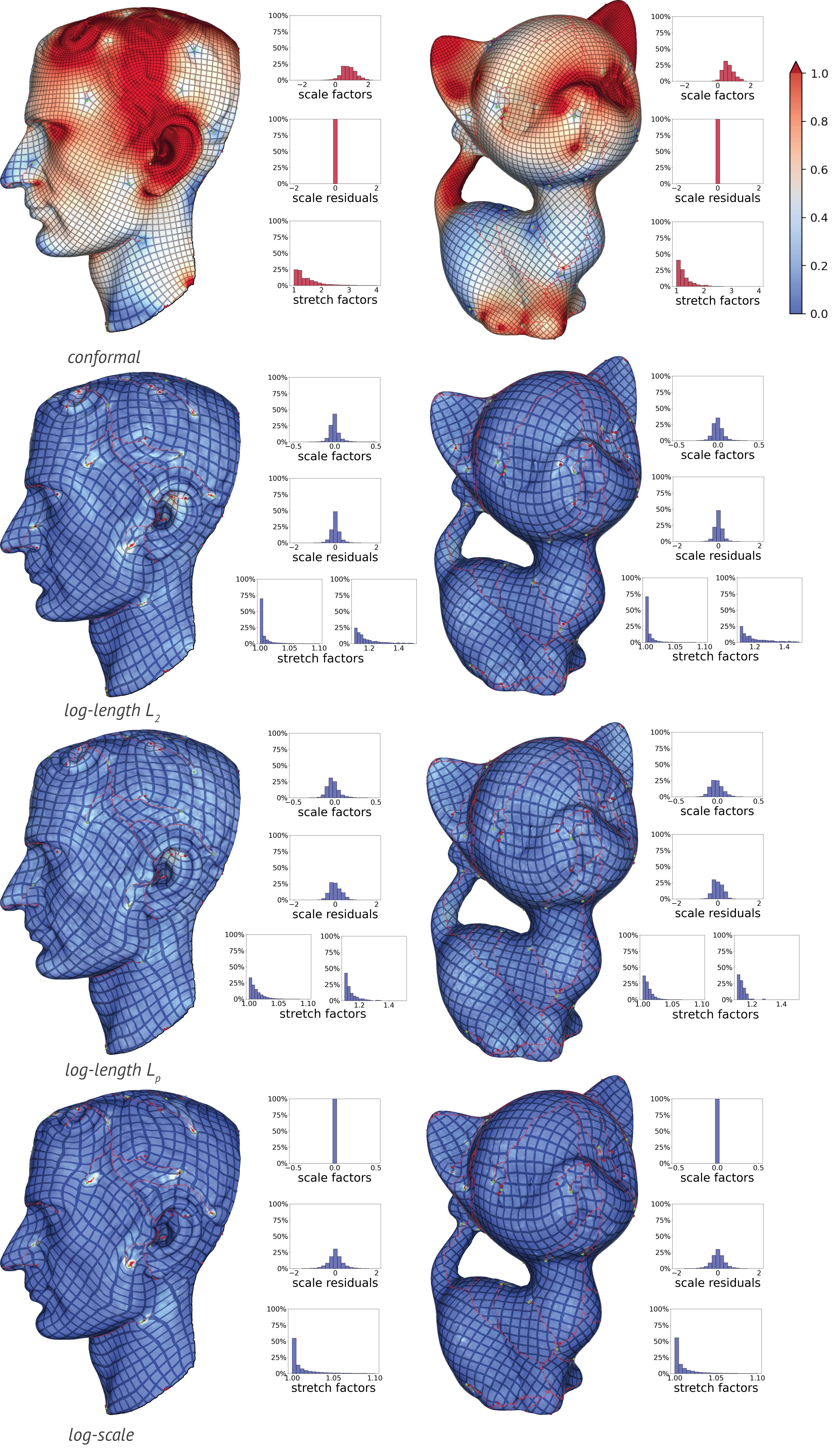}
    \caption{Optimization results for different objectives on an open mesh (left) and a closed mesh (right). We also visualize the distribution of best fit scale factors, scale residuals, and stretch factors (Section~\ref{sec:objectives}). All distortion objectives lead to approximately isometric parametrizations that are visually similar, but the distributions of the measures differ significantly.
    }
    \label{fig:objectives}
\end{figure}

\begin{figure*}[t!]
\centering
\includegraphics[width=\textwidth]{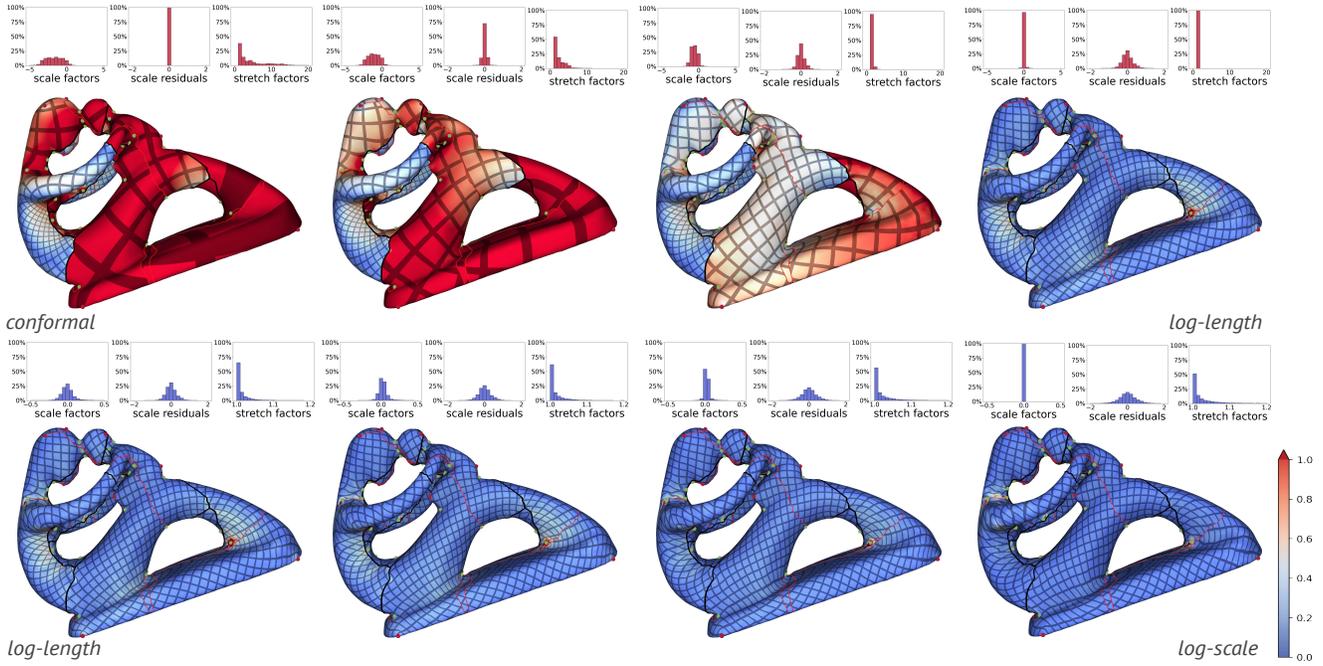}
    \caption{
    Top row: Interpolation from conformal to minimizer
    of log-length distortion $E^I$. Bottom row: interpolation between minimizers of log-length distortion $E^I$ and log-scale distortion $E^S$. This is a cut mesh with the cut boundary marked by black paths. As the metric is interpolated, the distributions of scale factors, scale residuals, and stretch factors also vary consistently, with the scale distortion decreasing, and length distortion increasing. 
    }
    \label{fig:interpolation}
\end{figure*}

\paragraph{Sensitivity to initialization.} In general, our method has a natural starting point (the original metric).  However, as the flatness constraints that we use are not convex, in general, the local constrained minimum we found is not unique. 
We vary the starting point of the optimization process for the $E^{SDQ}$ quadratic distortion measure, 
(Figure~\ref{fig:initial}), by adding  perturbations to the initial lengths. Note that as Penner coordinates have no constraints adding arbitrary noise still yields a valid starting point.  We observe that the results are quite similar even for very high noise. 

\begin{figure}
\includegraphics[width=\columnwidth]{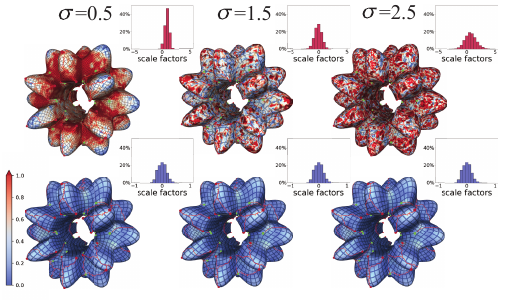}
    \caption{Varying the initial point for optimization: the numbers indicate additive Gaussian noise perturbation of logarithmic edge lengths; i.e, perturbation 2.5 means scaling by a factor of around 12. The upper row visualizes the perturbed metric using its own closest conformal projection. The lower row shows where the optimization converges in each case. }
    \label{fig:initial}
\end{figure}

\paragraph{Comparison.} We compare our approach with two other parametrization methods. The first is $uv$-optimization for a fixed connectivity with seamless constraints on a closed genus 0 mesh. We produced an initial conformal parametrization and then optimized the symmetric Dirichlet energy. The minimizer of our quadratic Penner coordinate symmetric Dirichlet energy that we compute is visually similar (Figure \ref{fig:comparison}).

Our method also supports free boundary optimization, so we compare our method with the $uv$ free boundary method for optimizing the symmetric Dirichlet energy described in \cite{Rabinovich:2017:SLI}. Some error is visible in our solution near the center of the mesh where distortion accumulates, but the overall result is again visually similar (Figure \ref{fig:disk}).

We note that some faces with large symmetric Dirichlet energy were present in both cases with our method. However, the triangle quality (as measured by the aspect ratio) remained comparable.

We also implemented a variation of our method for discrete metrics on a fixed connectivity and optimized the symmetric Dirichlet energy (as expressed in intrinsic log edge lengths) for these two examples. We successfully obtained valid solutions for both of these examples that had the same final energy (up to $10^{-7}$). However, we emphasize that this fixed connectivity approach is not guaranteed to find a solution (and indeed in many cases did not produce a valid output).

\begin{figure*}[t!]
    \includegraphics[width=0.66 \textwidth]{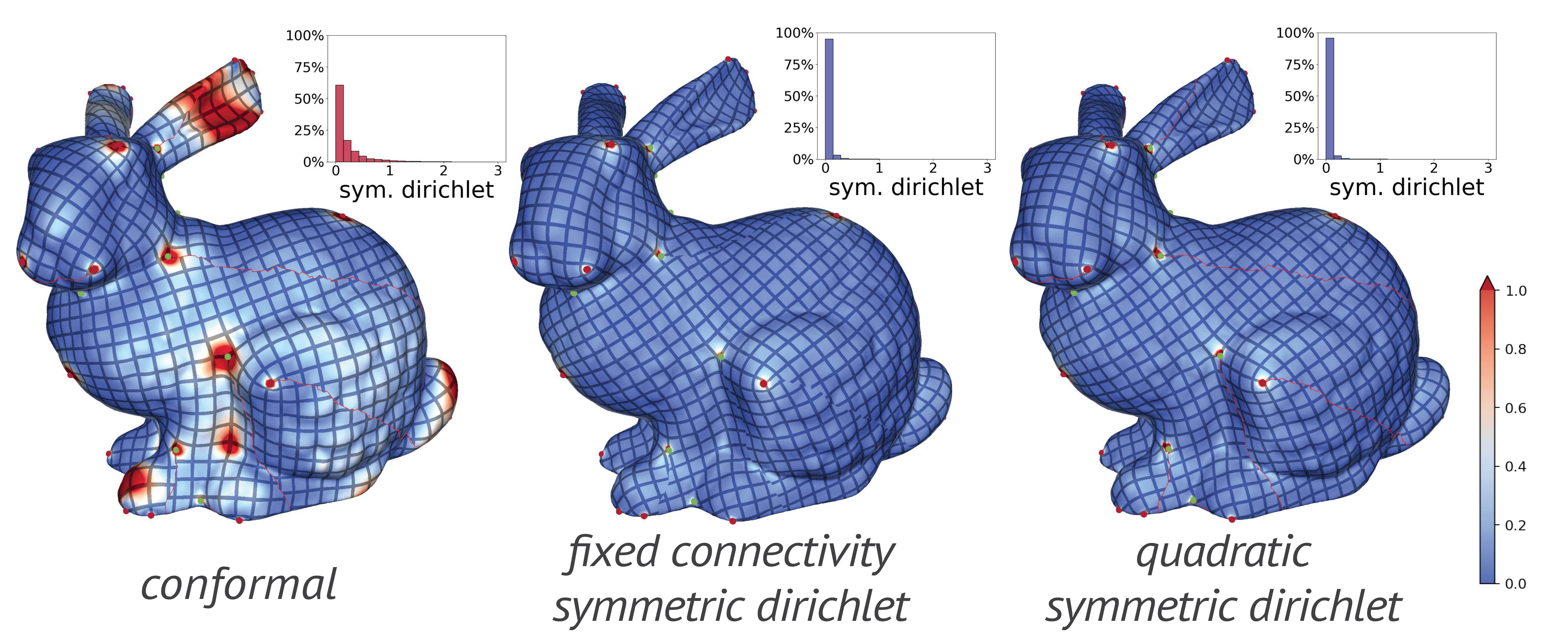}
    \caption{Optimization on a closed mesh compared with $uv$ optimization of the symmetric Dirichlet with seamless constraints on a fixed connectivity initialized with a conformal parametrization. We visualize the symmetric Dirichlet energy as a shading on the surface.
    }
    \label{fig:comparison}
\end{figure*}

\begin{figure*}[t!]
    \includegraphics[width=0.66 \textwidth]{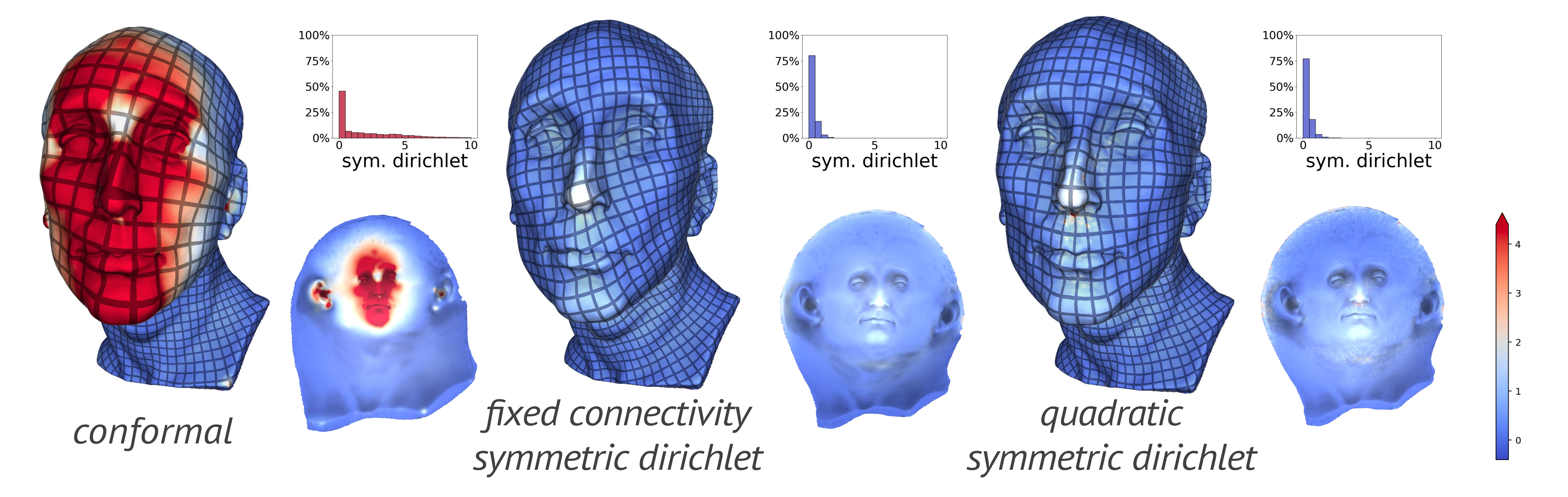}
    \caption{Optimization on a topological disk with free boundary compared with free boundary $uv$ optimization of the symmetric Dirichlet energy on a fixed connectivity initialized with Tutte's embedding. We again visualize the symmetric Dirichlet energy as a shading on the surface, and we also visualize the corresponding $uv$ domain with the shading of the 3D surface.
    }
    \label{fig:disk}
\end{figure*}

\paragraph{Testing on datasets.}  We use datasets from \cite{Myles:2014} and its cut version from \cite{campen2021efficient}, the latter being particularly challenging numerically,  leading to conformal map scale range up to $10^{100}$.  The former dataset consists of 114 meshes.  We tested with the log-length functional 
$E^I$. The distribution of average and maximal distortion measured as per-edge symmetric ratio of initial and cone-metric lengths, is shown in Figure~\ref{fig:histograms}, over all meshes in three versions of the dataset we use. 
\begin{figure}
\includegraphics[width=\columnwidth]{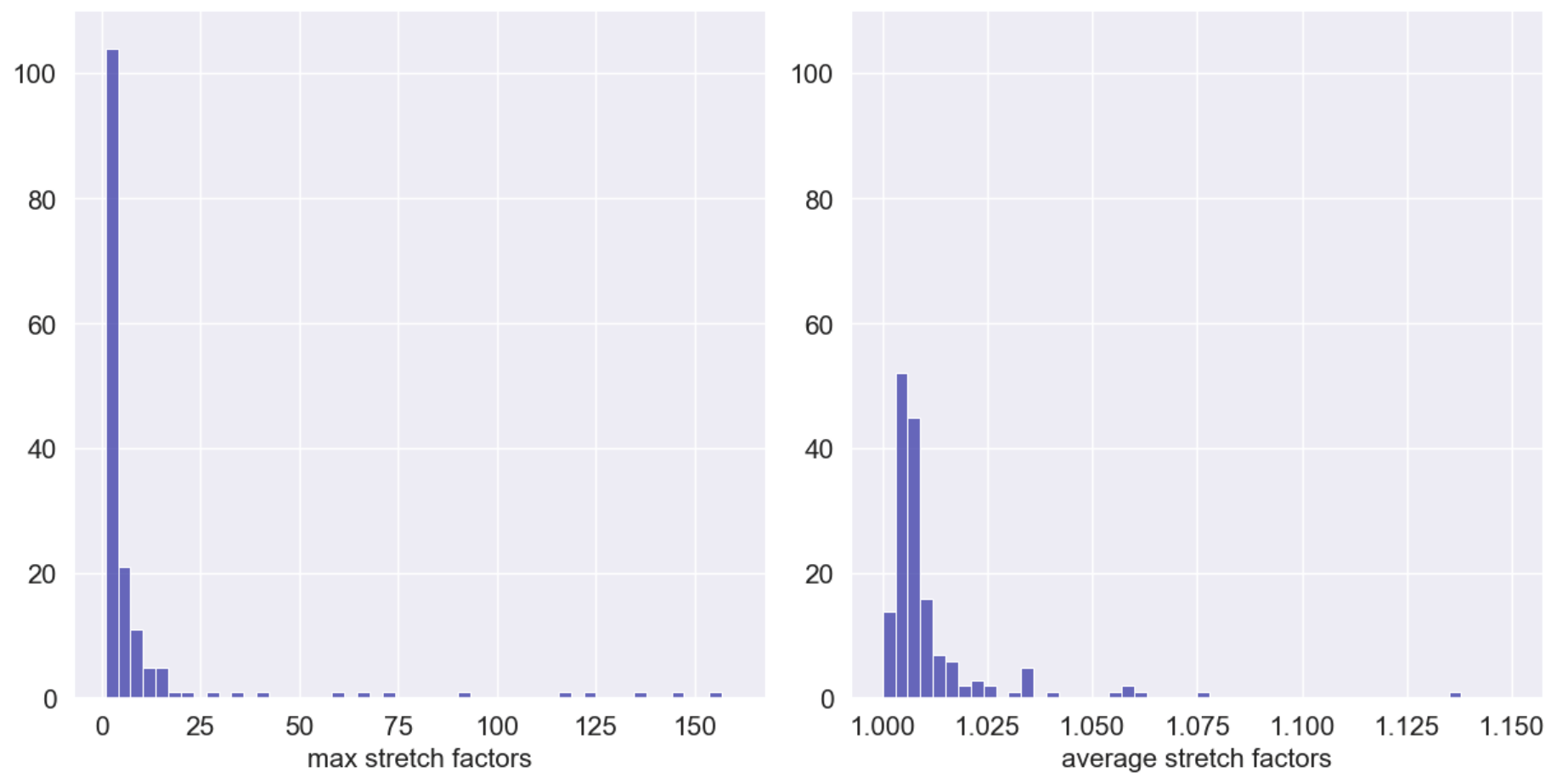}
    \caption{Left: distribution of average edge stretch factor per shape in the dataset.  Right: distribution of maximal stretch factors. Two outliers from the dataset were excluded.}
    \label{fig:histograms}
\end{figure}

\begin{figure}
\includegraphics[width=\columnwidth]{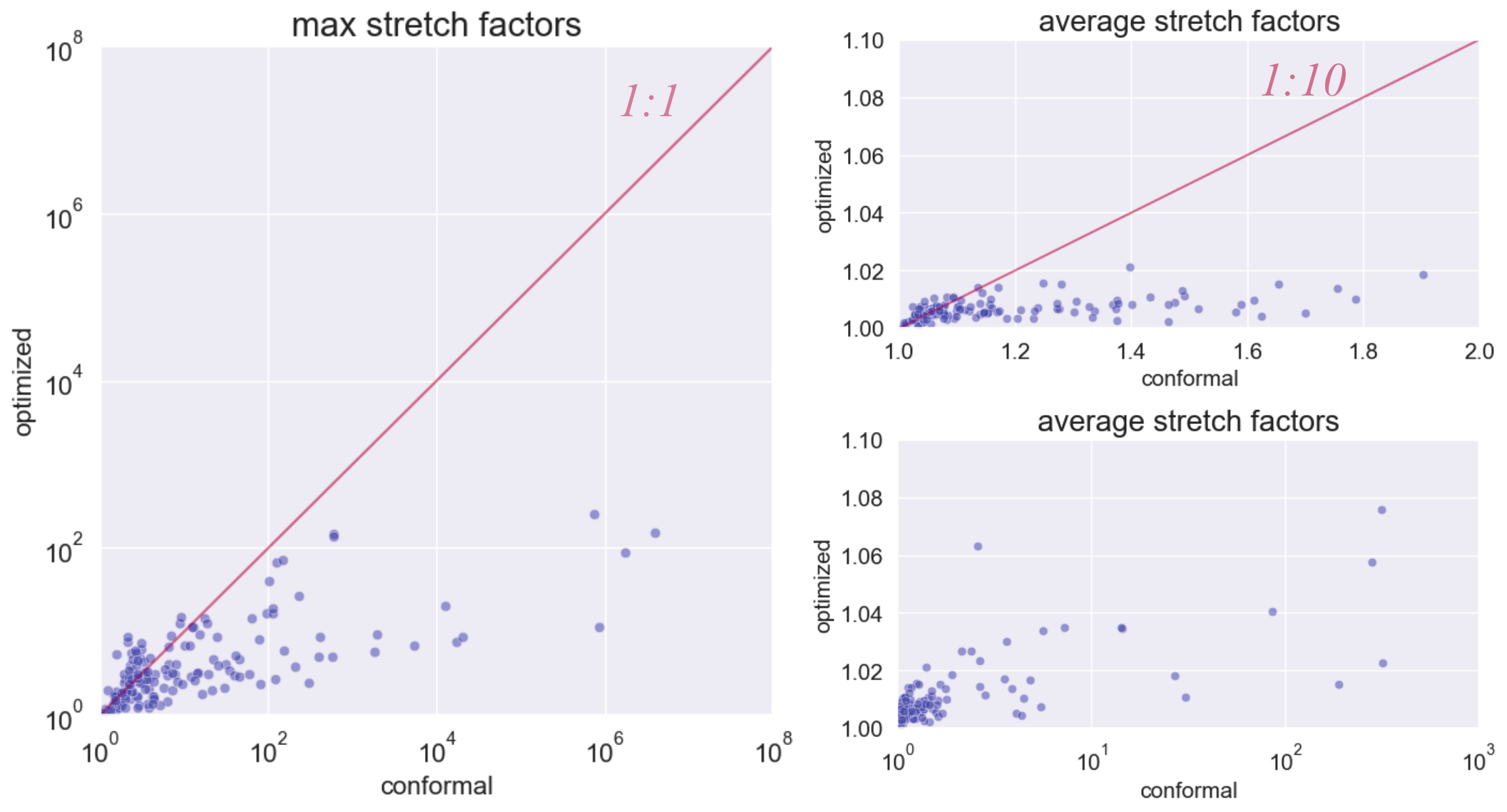}
    \caption{Left: scatter plot of maximal edge stretch factor before and after optimization.  Right: scatter plot of average stretch factors. }
    \label{fig:scatter}
\end{figure}

\begin{figure*}[t!]
    \includegraphics[width=\textwidth]{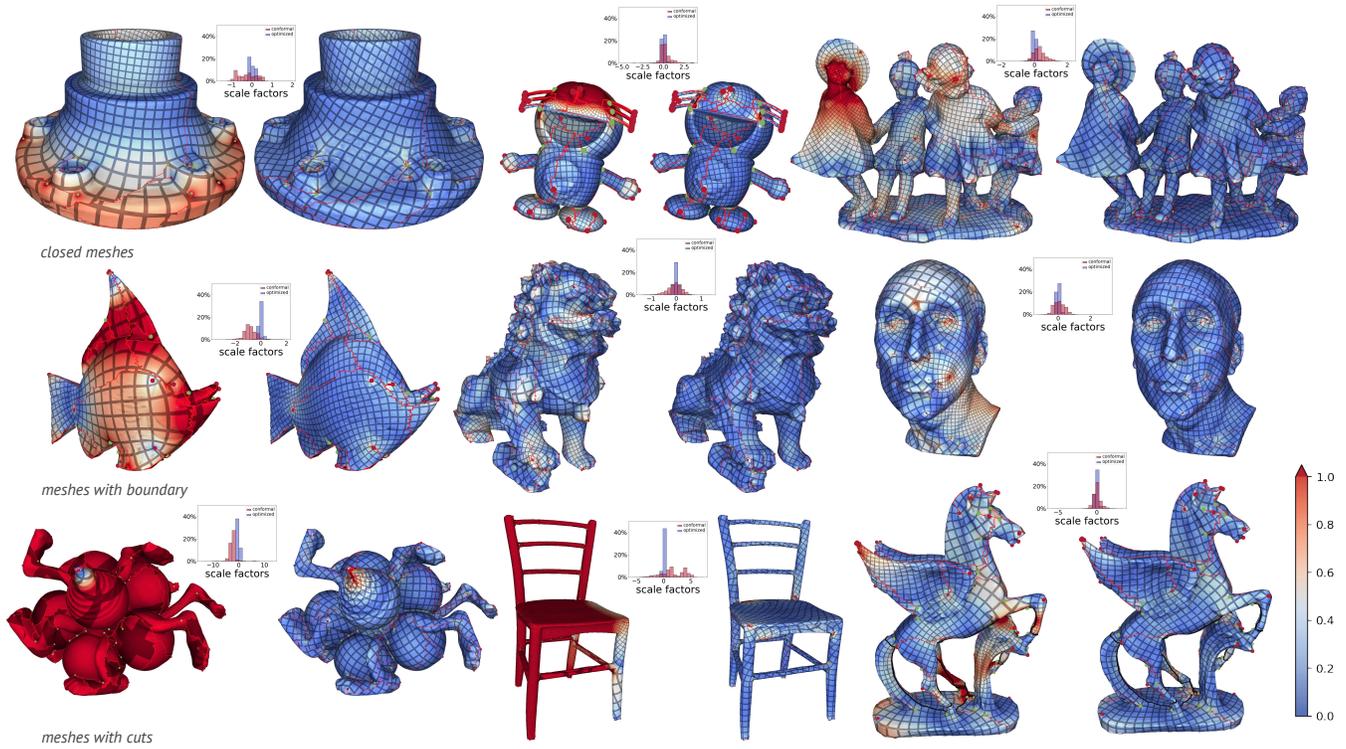}
    \caption{Examples of isometrically and conformally parameterized models: closed,  with boundary, cut with fixed boundary angles.  For  each model, we show  conformal and log-length optimized metrics. 
    }
    \label{fig:examples}
\end{figure*}

\paragraph{Methods.} We compare projected gradient and projected Newton descent with popular first-order optimization methods on the unconstrained shear space. Figure~\ref{fig:methods}
shows convergence plots comparing five optimization methods on two meshes for the log-length and quadratic symmetric Dirichlet energies. The projected gradient descent is equivalent to projected Newton descent for the log-length energy and converges quickly, but we observe that it can converge very slowly for the more poorly conditioned symmetric Dirichlet approximation. The projected Newton method, on the other hand, quickly converges for both quadratic energies, and is our primary method for optimization. For higher-order energies, such as $L_p$ log-length and log-scale distortion, the Hessian is nonconstant, so we instead use simple projected gradient descent.

\begin{figure}
\includegraphics[width=\columnwidth]{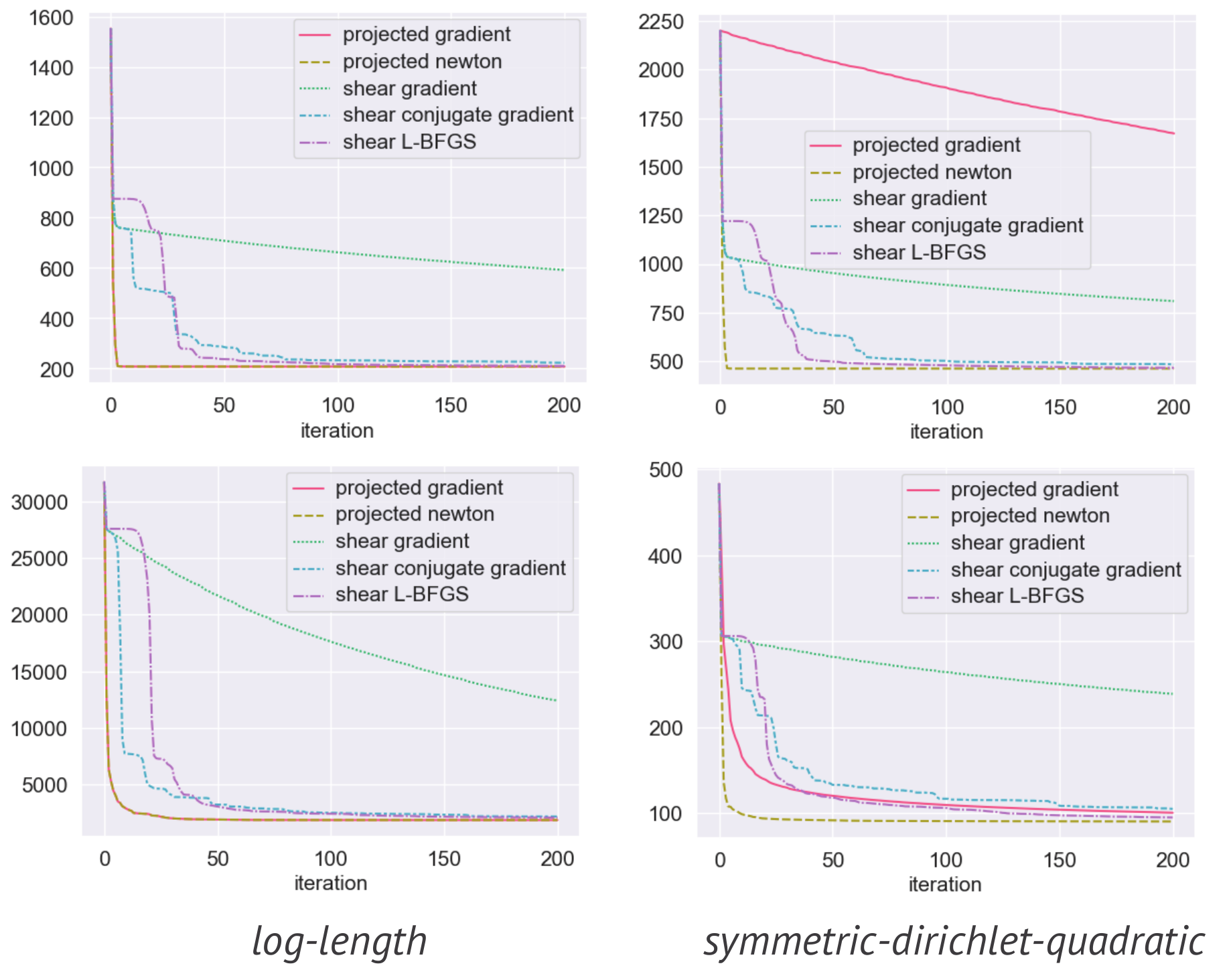}

    \caption{ Convergence plots of the energy per iteration for different optimization methods. }
    \label{fig:methods}
\end{figure}

\paragraph{Convergence.}  The algorithm overall has very stable 
behavior with most improvements on the first 10-20 iterations, except artificially challenging examples. Figure~\ref{fig:convergence}
shows convergence plots for closed and cut meshes as well as per iteration timings plotted against mesh size.

\begin{figure}
\includegraphics[width=\columnwidth]{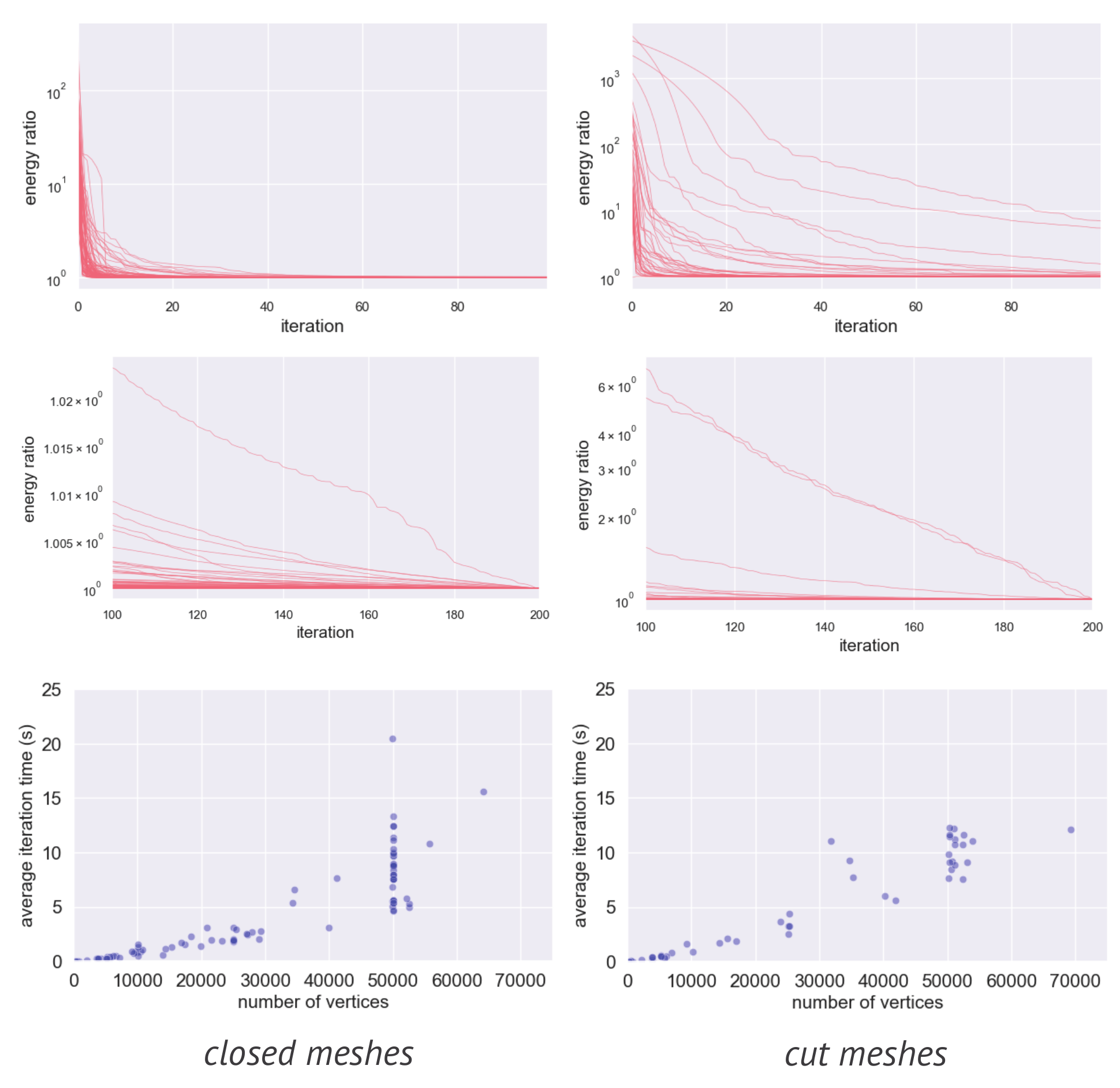}

    \caption{ Convergence plots of the ratio of the log-length energy per iteration to the final energy after 200 iterations for closed (left) and cut (right) meshes with fixed angle constraints. Orders of magnitude improvement are achieved in the first few iterations.}
    \label{fig:convergence}
\end{figure}

\paragraph{Refinement.}  Our refinement method produces much coarser refinements than the overlay, and the refined faces are localized to regions where they are necessary. Figure~\ref{fig:refinement}
shows the percentage increases in the number of faces of the refined meshes relative to the original face counts for all meshes in our data set. We also provide the percentage increases for the full overlay refinements for comparison.

\begin{figure}
\includegraphics[width=\columnwidth]{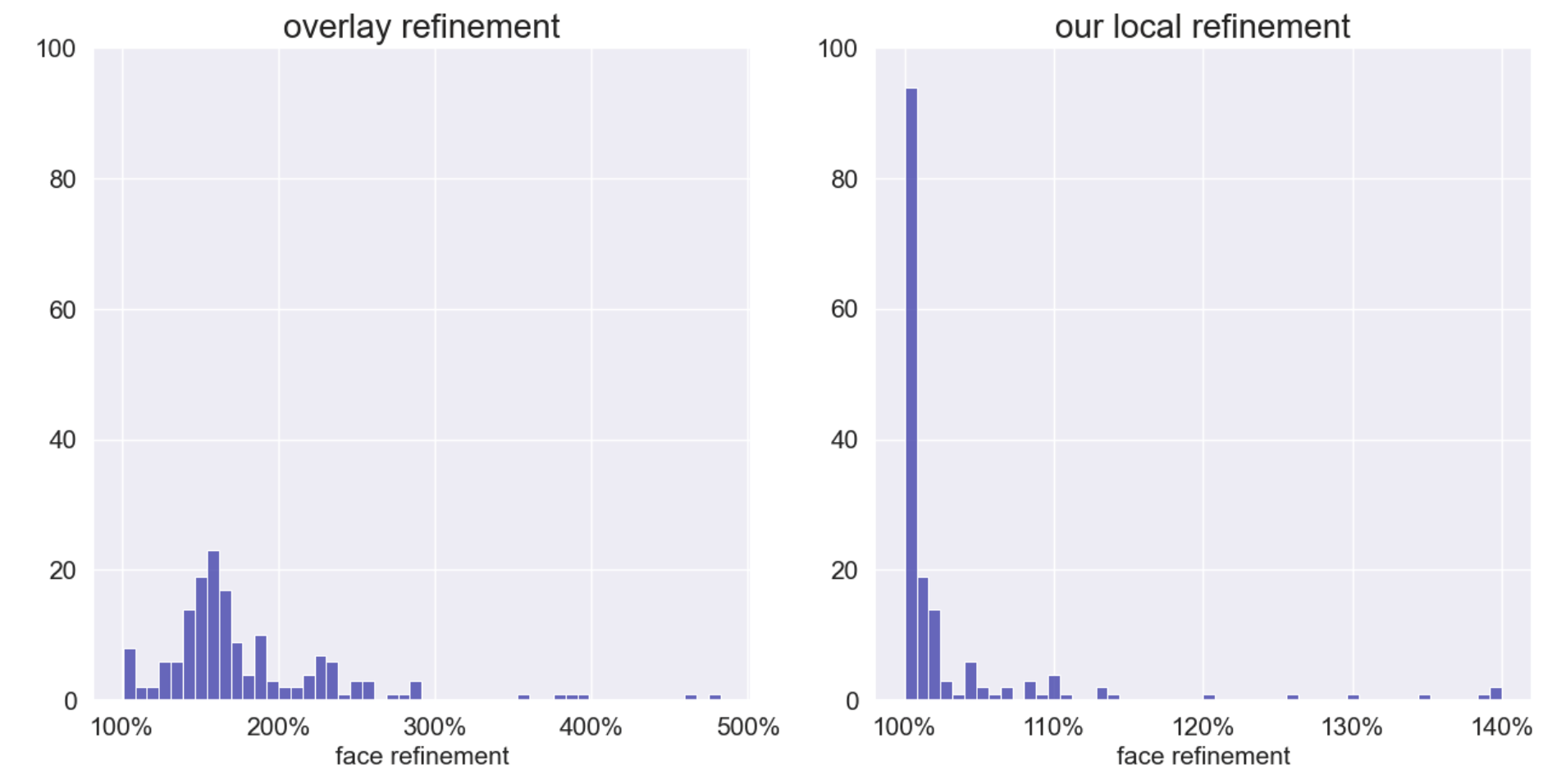}

    \caption{ Face count percentage increase histograms for the full overlay refinement (left) and our local refinement method (right). One outlier was excluded from the latter.}
    \label{fig:refinement}
\end{figure}

\section{Limitations and Concluding Remarks}
\label{sec:conculsions}

We have demonstrated that Penner coordinates yield a natural parametrization of all cone metrics with fixed vertices, allowing to express metric optimization problems in a simple form and provide guarantees on solution existence. 

We view the proposed method just as a first step in the direction of optimization of metric in Penner coordinates. As currently formulated the method is relatively slow, as first-order methods such as gradient descent and conjugate gradients converge slowly; as shown in \cite{campen2021efficient}, in most cases, the conformal map step requires a small number of iterations, but overall hundreds of linear solves may be needed.  Thus, to make the method practical, more efficient optimization methods are needed, or the the approach  needs to be combined with highly efficient techniques that do not provide guarantees, and used on already nearly flat metrics. 

While separate angle and length constraints on the boundary are natural for the proposed metric parametrization, for simultaneous angle and length constraints, equivalent, up to rigid transformations, to fixing the whole boundary, there is no proof of solution existence, and this is an interesting direction for future theoretical work. 

\paragraph{Extensions.} While we mostly focused on the most natural settings for Penner metric coordinates (closed and open meshes with fixed angles), 
the method readily extends to several settings:    fully isometric boundary with no fixed angles, as proposed in \cite{springborn2008conformal}, 
fully free boundary (no length or angles fixed), and  fully fixed boundary (both angles and lengths are fixed). 
We note that the latter case is not supported by the conformal projection, so there is no guarantee that these constraints can be 
satisfied simultaneously.  Theoretically, it is a question about intersection between a linear subspace of dimension $|E|-|E_b|$ where 
$|E_b|$ is the number of boundary edges,  and the nonlinear manifold of angle-constrained metric, of dimension $|E|-|V|$. 
It is an interesting question for future work.

\begin{acks}
Supported by \grantsponsor{_OAC}{OAC}{https://nsf.gov/}-\grantnum{_OAC}{1835712}, \grantsponsor{_CHS}{CHS}{https://nsf.gov/}-\grantnum{_CHS}{1901091}, and a gift from Adobe.
\end{acks}

\bibliographystyle{ACM-Reference-Format}
\bibliography{99-bib}

\appendix

\section{Appendix}
\label{sec:appendixA}

\subsection{Computing triangle angle derivatives}
The calculation of $\nabla_{\tl}\alpha$, where $\tl$ are
Penner coordinates for the Delaunay triangulation obtained from
$\lambda$ is standard (see, e.g., \cite{bobenko2015discrete} where it is used to compute the Hessian of the convex function in the minimization defining the conformal mapping. 

More specifically, consider the triangle with corners with indices $i,j,k$ (i.e., each angle of each triangle gets an index); and let $l,m,n$ be the indices of edges $(i,j)$, $(j,k)$ and $(k,i)$.
Then 

\begin{align*}
\partial\alpha_i/\partial \lambda_l &=& -2\cot \alpha_j\\
\partial\alpha_i/\partial \lambda_m &=&  2(\cot \alpha_j+\cot \alpha_k)\\
\partial\alpha_i/\partial \lambda_n &=& -2\cot \alpha_k\\
\end{align*}

and all other derivatives of $\alpha_i$ are zero; this determines 
the entries of $i$-th row of the $3|F| \times |E|$ matrix $\nabla_{\tl}\alpha$.

\subsection{Independent set of dual-shear basis edges}

\begin{prop} For the $|E|\times |E|$ matrix $C$ with columns given by $\lambda^{\perp,ij}$, removing columns corresponding the edges of a spanning tree $T$ of $E$ yields a matrix with co-rank 1. Furthermore, removing one additional column corresponding to an edge $e \in E$ that leaves the remaining edges of $E \setminus T$ connected yields a full rank matrix.
\end{prop}

\begin{proof}
The basic idea is to show that removing these edges does not change the span of the remaining vectors. For the spanning tree $T$, we can use the following leaf clipping argument. For an edge adjacent to a leaf of a tree, no other edges in the mesh adjacent to the leaf can be in the tree. Thus, as the sum of the adjacent edge vectors is 0, the vector corresponding to the tree edge is a linear combination of the remaining edges. We can then proceed inductively. Since we know the rank of $C$ is $|E| - |V|$, removing these $|T| = |V| - 1$ edges results in a co-rank 1 matrix.

For $T \cup \{e\}$, we can again apply the same leaf clipping argument, but in the base case we clip our tree to a single triangle. For this case, we have three edge vectors left and three independent equations corresponding to the vertices of the triangle, which uniquely determines expressions of these three edge vectors in terms of the other edge vectors in the span. We thereby obtain a matrix of full rank $|E| - |V|$.
\end{proof}

\subsection{Proof of Continuity for maps between Hyperbolic Surfaces}

\begin{prop}
Suppose two hyperbolic surfaces $H(M, \ell)$ and $H(M, \ell')$ are defined with respect to the same connectivity, and suppose $\tau \in \bR^{|H|}$ satisfies the linear constraints
\[
\begin{aligned}
Z\tau & = 0 \\
C\tau & = \shear' - \shear
\end{aligned}
\]
We equip each triangle $T_{ijk}$ in $M$ with an equilateral reference triangle $T^{\text{ref}}$, and we define the map $P^{\tau}_{ijk}: T^{\text{ref}} \to T^{\text{ref}}$ in terms of (unnormalized) barycentric coordinates on the equilateral reference triangle by
\[
P^{\tau}_{ijk}(w_i, w_j, w_k) = 
\begin{bmatrix}
e^{2(\tau_{ki} - \tau_{ij})/3}w_i \\
e^{2(\tau_{ij} - \tau_{jk})/3}w_j \\
e^{2(\tau_{jk} - \tau_{ki})/3}w_k
\end{bmatrix}
\]
We then define the map $P^{\tau}: H(M, \lambda) \to H(M, \lambda')$ per triangle by
\[
P^{\tau}|_{T_{ijk}} = P^{\tau}_{ijk}
\]
$P^{\tau}$ is well-defined and continuous.
\end{prop}

\begin{figure}
    \includegraphics[width=\columnwidth]{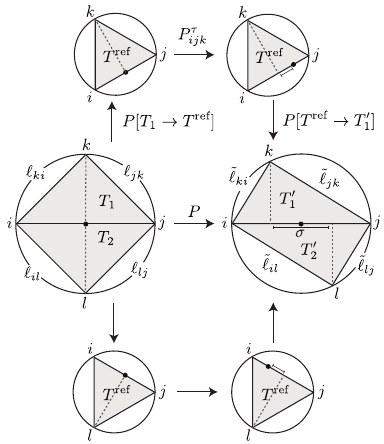}
    \caption{A map between two triangle charts, one with zero shear on a common edge, the other with shear $\shear$.}
    \label{fig:reparametrization}
\end{figure}

\begin{proof}
Let $P[T \to T']$ denote the circumcircle preserving projective maps between triangles $T$ and $T'$ defined in \cite{campen2021efficient}. This map is given in terms of barycentric coordinates by
\[
P[T \to T'](w_i, w_j, w_k) = (S_i w_i, S_j w_j, S_k w_k)
\]
with $S_i = \frac{\ell_{ij} \tilde{\ell}_{jk} \ell_{ki}}{\tilde{\ell}_{ij} \ell_{jk} \tilde{\ell}_{ki}}$, where $\ell$ are the lengths of the edges of the source triangle and $\tilde{\ell}$ are the lengths of the edges of the target triangle.

Let $ij \in E$ with adjacent triangles $T_{ijk}$ and $T_{jil}$, and consider the two 2-triangle charts $(T_1, T_2)$ and $(T_1', T_2')$ corresponding to this edge in $(M, \lambda)$ and $(M, \lambda')$ respectively. The maps between these two 2-triangle charts induced by $P^{\tau}$ is given by
\[
P[T^{\text{ref}} \to T_1'] \circ P^{\tau}_{ijk} \circ P[T_1 \to T^{\text{ref}}]
\]
on $T_1$ and
\[
P[T^{\text{ref}} \to T_2'] \circ P^{\tau}_{jil} \circ P[T_2 \to T^{\text{ref}}]
\]
on $T_2$. Since these maps are projective and thus continuous in the interior of the triangles $T_1, T_2$, it suffices to show that they agree on the common edge $ij$ to establish continuity.

The barycentric coordinate map for $T_1$ restricted to the edge $e_{ij}$ is
\[
\begin{bmatrix}
w_i \\
w_j
\end{bmatrix} \mapsto
\begin{bmatrix}
e^{2(\tau_{ki} - \tau_{ij})/3}
\frac{\ell_{jk}'}{\ell_{ki}' \ell_{ij}'}
\frac{\ell_{ki} \ell_{ij}}{\ell_{jk}} w_i \\
e^{2(\tau_{ij} - \tau_{jk})/3}
\frac{\ell_{ki}'}{\ell_{ij}' \ell_{jk}'} 
\frac{\ell_{ij} \ell_{jk}}{\ell_{ki}} w_j
\end{bmatrix}
\]
where $\ell$ are the lengths of the edges in the two triangle chart $(T_1, T_2)$ and $\ell'$ are the lengths in $(T_1', T_2')$. Similarly, the barycentric coordinate map for $T_2$ along edge $e_{ij}$ is
\[
\begin{bmatrix}
w_i \\
w_j
\end{bmatrix} \mapsto
\begin{bmatrix}
e^{2(\tau_{ji} - \tau_{il})/3}
\frac{\ell_{lj}'}{\ell_{ji}' \ell_{il}'}
\frac{\ell_{ji} \ell_{il}}{\ell_{lj}} w_i \\
e^{2(\tau_{lj} - \tau_{ji})/3}
\frac{\ell_{il}'}{\ell_{lj}' \ell_{ji}'} 
\frac{\ell_{lj} \ell_{ji}}{\ell_{il}} w_j
\end{bmatrix}
\]
To show continuity, we must show that the ratio of the weightings of the barycentric coordinates for the weightings are the same, so the continuity condition is thus
\[
\frac{e^{2(\tau_{ki} - \tau_{ij})/3}}{e^{2(\tau_{ij} - \tau_{jk})/3}}
\frac{(\ell_{jk}')^2}{(\ell_{ki}')^2}
\frac{\ell_{ki}^2}{\ell_{jk}^2} 
=
\frac{e^{2(\tau_{ji} - \tau_{il})/3}}{e^{2(\tau_{lj} - \tau_{ji})/3}}
\frac{(\ell_{lj}')^2}{(\ell_{il}')^2}
\frac{\ell_{il}^2}{\ell_{lj}^2}
\]
Expressed logarithmically with $\lambda = 2 \ln \ell$, the continuity condition is
\[
\begin{aligned}
\frac{2}{3}(\tau_{ki} & - \tau_{ij} - \tau_{ij} + \tau_{jk})
 + \lambda_{jk}' - \lambda_{ki}'
+ \lambda_{ki} - \lambda_{jk}\\
&=
\frac{2}{3}(\tau_{ji} - \tau_{il} - \tau_{lj} + \tau_{ji})
+ \lambda_{lj}' - \lambda_{il}' + 
\lambda_{il} - \lambda_{lj}
\end{aligned}
\]
Since $\tau_{ij} + \tau_{jk} + \tau_{ki} = 0$, we have
\[
\tau_{ij} + \tau_{ji} =
\frac{1}{2}(\lambda_{jk}' - \lambda_{ki}' + \lambda_{il}' - \lambda_{lj}')
- \frac{1}{2}(\lambda_{jk} - \lambda_{ki} + \lambda_{il} - \lambda_{lj}) = \shear_{ij}' - \shear_{ij}
\]
which follows from the per edge constraints, and so the continuity condition is satisfied.
\end{proof}

\subsection{Projective maps from hyperbolic translations}
\begin{prop} Given an ideal hyperbolic triangle $T_{ijk}$ and signed hyperbolic distances $\tau_{ij}, \tau_{jk}, \tau_{ki}$ per edge such that 
\[
\tau_{ij} + \tau_{jk} + \tau_{ki} = 0
\]
there is a unique projective map $P:T_{ijk} \to T_{ijk}$ such that
\begin{align*}
d_H(x, P(x)) = \tau_{ij} && \forall x \in e_{ij} \\
d_H(y, P(y)) = \tau_{jk} && \forall y \in e_{jk} \\
d_H(z, P(z)) = \tau_{ki} && \forall z \in e_{ki} \\
\end{align*}
where $d_H$ is the (signed) hyperbolic distance in the Beltrami-Klein model. Conversely, any projective map $P:T_{ijk} \to T_{ijk}$ satisfies the above property for some values of $\tau_{ij}, \tau_{jk}, \tau_{ki}$ that sum to zero.
\end{prop}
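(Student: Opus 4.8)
The plan is to reduce the statement to a short linear-algebra computation once all projective self-maps of a triangle that fix its vertices are identified. Working in homogeneous coordinates adapted to $T_{ijk}$ (equivalently, by congruence of ideal triangles, with any fixed model triangle), I would place the three ideal vertices at $[1:0:0]$, $[0:1:0]$, $[0:0:1]$; since the vertices lie on the model's boundary conic, the edge lines still meet that conic exactly at the vertices, so hyperbolic lengths along edges are computed by the usual cross-ratio formula with the vertices as the distinguished points. The first step is the observation that any $P$ with $d_H(x,P(x))$ constant along an edge must fix that edge's two ideal endpoints: restricted to the edge (a projective line), $P$ is a Möbius transformation, and it is a translation of the line's hyperbolic metric only if it fixes both endpoints; a map with different fixed-point behaviour along the edge would not have constant displacement. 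Hence $P$ fixes all three vertices, so it is represented by a diagonal matrix $\mathrm{diag}(a,b,c)$, with $a,b,c$ of a common sign (positive, say) so that $P$ carries the open triangle to itself. Such maps form a two-parameter family (diagonal matrices modulo a common scalar), matching the two free parameters in $(\tau_{ij},\tau_{jk},\tau_{ki})$ subject to $\tau_{ij}+\tau_{jk}+\tau_{ki}=0$.

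Next I would compute the edge action. On $e_{ij}$, parametrized by the affine coordinate $u=t/s$ on $[s:t:0]$, the map acts by $u\mapsto(b/a)\,u$, while the hyperbolic arc-length coordinate along $e_{ij}$ is an affine function of $\log u$ with a universal slope (from the cross-ratio distance formula). Therefore $P$ restricts to each edge as a hyperbolic translation whose signed distance is a fixed multiple of $\log(b/a)$, $\log(c/b)$, $\log(a/c)$ respectively. These three numbers automatically sum to zero, which is exactly why the hypothesis $\tau_{ij}+\tau_{jk}+\tau_{ki}=0$ is the right compatibility condition.

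To finish, set $\alpha=\log a$, $\beta=\log b$, $\gamma=\log c$. Given admissible $\tau$, the linear system $\beta-\alpha\propto\tau_{ij}$, $\gamma-\beta\propto\tau_{jk}$, $\alpha-\gamma\propto\tau_{ki}$ is consistent because both sides have total zero, and its solution set is a single line $\{(\alpha_0,\beta_0,\gamma_0)+t(1,1,1)\}$. The free parameter $t$ is precisely the common rescaling of $\mathrm{diag}(a,b,c)$, which leaves the projective map unchanged, so $P$ exists and is unique. I would conclude by recording $P$ in barycentric coordinates, $P([\lambda_i:\lambda_j:\lambda_k])=[a\lambda_i:b\lambda_j:c\lambda_k]$ with $(a,b,c)$ normalized by $abc=1$, and checking the orientation conventions so that the displacements carry the signs of the $\tau$'s.

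I expect the main obstacle to be bookkeeping rather than anything conceptual: pinning down the universal constant relating edge displacement to $\log(b/a)$, and the orientation conventions for the signed distance $d_H$, so that the per-edge displacements equal the prescribed $\tau_{ij}$ with the correct signs and scale (consistent with the factor-of-two and $\tfrac12$ normalizations used for Penner and shear coordinates elsewhere in the paper). The existence-and-uniqueness core is entirely contained in the telescoping linear system above.
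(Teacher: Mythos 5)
Your proposal is correct and its core — realizing $P$ as a diagonal projective map $[\lambda_i:\lambda_j:\lambda_k]\mapsto[a\lambda_i:b\lambda_j:c\lambda_k]$ in barycentric/homogeneous coordinates on a fixed reference triangle, with the edge displacements given by logarithms of the ratios of the diagonal entries — is exactly the construction the paper uses for existence (the paper just writes the entries as explicit exponentials of the $\tau$'s and verifies the displacement by the cross-ratio formula, which is the ``universal constant'' bookkeeping you defer). Where you genuinely diverge is uniqueness. The paper argues that a projective map is determined by the images of four points in general position, taking the three vertices together with the intersection of the altitudes, whose image is pinned down by the prescribed translations. You instead classify all admissible maps first: constant signed displacement along an edge forces $P$ to fix that edge's two ideal endpoints, hence $P$ fixes all three vertices and is diagonal, and then uniqueness reduces to the telescoping linear system $\beta-\alpha,\ \gamma-\beta,\ \alpha-\gamma$ having a one-dimensional solution line $(1,1,1)$ that is exactly the projective rescaling ambiguity. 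Your route is somewhat more self-contained (it does not invoke the four-point determination of projective maps and simultaneously explains why the zero-sum hypothesis is necessary and sufficient), at the cost of having to justify that the edge restriction with constant displacement cannot swap the endpoints — which is fine, since an orientation-reversing projective map of the edge cannot have constant nonzero displacement and cannot be a nonidentity map with displacement zero everywhere. The only remaining work in your version is the sign/scale normalization you already flag, which is genuinely just the cross-ratio computation the paper carries out explicitly.
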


\begin{proof}
Without loss of generality, we use the reference triangle $T_{ijk} \subset \mathbb{R}^3$ with vertices at the standard basis elements so that the normalized barycentric coordinates are the same as the Euclidean coordinates. For a point $x$ with barycentric coordinates $(w_i,w_j,w_k) \in T_{ijk}$, we define $P(x)$ by the (unnormalized) barycentric coordinates
\[
\begin{bmatrix}
e^{2(\tau_{ki} - \tau_{ij})/3}w_i \\
e^{2(\tau_{ij} - \tau_{jk})/3}w_j \\
e^{2(\tau_{jk} - \tau_{ki})/3}w_k
\end{bmatrix}
\]
Since $e^t > 0$ for all $t \in \bR$, this map is well defined, and it clearly maps each vertex to itself.  Furthermore, we have for a point $x \in e_{ij}$ with barycentric coordinates $(w_i, w_j, 0)$ that the hyperbolic distance $d_H$ between $x$ and $P(x)$ is given by
\[
\begin{aligned}
d_H(x, P(x)) & = \frac{1}{2}\ln\left(\frac{(e^{2(\tau_{ij} -
\tau_{jk})/3} w_j l_{ij})(w_i l_{ij})}{(w_j l_{ij})(e^{2(\tau_{ki} - \tau_{ij})/3}w_i l_{ij})}\right) \\
& = \frac{1}{2}\ln\left(\exp\left(\frac{2}{3}(\tau_{ij} - \tau_{jk}) - \frac{2}{3}(\tau_{ki} - \tau_{ij})\right)\right) \\
& = \frac{1}{3}(2\tau_{ij} - \tau_{jk} - \tau_{ki}) \\
& = \tau_{ij} 
\end{aligned}
\]
where $l_{ij}$ is the Euclidean length of $e_{ij}$. By similar direct calculations, we also have the result for the edges $e_{jk}$ and $e_{ki}$.

Uniqueness follows from the fact that there is a unique projective map that sends four points that are not colinear to four points that are not colinear. $P$ must send the three vertices $v_i, v_j, v_k$ to themselves; we show that the image of the midpoint $m$ of $T_{ijk}$ is determined by $\tau$. Consider the lines $l_i, l_j, l_k$ between $v_i, v_j, v_k$ and $m$ respectively. The intersection of these lines is $m$, and they intersect the opposite edges of the triangles at the midpoints $m_{jk}, m_{ki}, m_{ij}$. As $P$ is projective, it maps these lines to lines $P(l_i), P(l_j), P(l_k)$. Moreover, the position of $P(m_{ij})$ is determined by the condition $d_H(m_{ij}, P(m_{ij})) = \tau_{ij}$, and likewise for $m_{jk}, m_{ki}$. Thus, $P(l_i), P(l_j), P(l_k)$ are determined by the translations, so their intersection $P(m)$ is also determined by the translations. As lines that intersect $e_{jk}$ and $v_i$ are in one to one correspondence with $\tau_{ij}$ and likewise for $\tau_{jk}$, distinct translations correspond to distinct projective maps.

Conversely, for a general projective map $P:T_{ijk} \to T_{ijk}$, we have that $P(m)$ must lie in the interior of $T_{ijk}$, and there is some choice of $\tau$ such that the projective map they induce maps $m$ to $P(m)$, so by uniqueness $P$ must be this projective map.
\end{proof}

\end{document}